\documentclass[11pt]{article}

\usepackage[margin=1in]{geometry}
\usepackage{microtype}
\usepackage{setspace}
\usepackage{graphicx}
\usepackage{xcolor}
\usepackage{booktabs}
\usepackage{colortbl}
\usepackage{hyperref}
\hypersetup{colorlinks=true,allcolors=blue}

\usepackage{amsmath,amssymb,amsthm}
\usepackage{mathtools}

\usepackage{caption}
\usepackage{subcaption}
\usepackage{tikz}
\usepackage{algorithm}
\usepackage{algorithmic} 

\usepackage{authblk}

\title{\textbf{Certified Scalable Enclosures for Uncertain Underdetermined Systems}}

\author[1]{Rudra Prakash\thanks{Corresponding author: \href{mailto:rudra.prakash@ee.iitd.ac.in}{rudra.prakash@ee.iitd.ac.in}}}
\author[1]{Shaunak Sen}

\affil[1]{Department of Electrical Engineering, Indian Institute of Technology Delhi, New Delhi--110016, India}

\date{} 

\newcommand{\iv}[1]{\ensuremath{\left[#1\right]}} 
\newcommand{\notation}{\paragraph{Notation.}}

\newcommand{\sep}{,\ }
\newenvironment{keyword}{\par\medskip\noindent\textbf{Keywords: }\ignorespaces}{\par\medskip}

\newtheorem{theorem}{Theorem}
\newtheorem{proposition}{Proposition}
\newtheorem{corollary}{Corollary}
\newtheorem{assumption}{Assumption}
\newtheorem{definition}{Definition}
\newtheorem{remark}{Remark}
\newtheorem{mainproblem}{Problem}
\newtheorem{relatedproblem}{Related problem}
\newtheorem{example}{Example}

\begin{document}
\maketitle
\onehalfspacing

\begin{abstract}
The central challenge in underdetermined models with bounded uncertainty, such as in nonlinear design and estimation, is certifying the solution sets. Conventional solution methodologies such as those based on Newton's method or on sampling-based uncertainty quantification are either not applicable due to the underdetermined nature or do not give guarantees that all solutions have been found, assuming they converge. We addressed this issue for a problem that arises in nonlinear design, where a target steady-state box is prescribed and the parameters have to be found. We developed a trapezoidal linearisation method that rigorously encloses all solutions and combined it with a tractable linear programming method to compute the component-wise bounds. We showed that the resulting families of linear programs can iteratively contract an initial parameter region. A subdivision of the target state and the initial parameter region can improve the contraction and provide tighter enclosures. The trapezoidal relaxation gives a family of linear programs that can also be used for finding solutions for interval linear systems via the Oettli--Prager characterisation. We demonstrate the generality of the method through different applications, including nonlinear design in biomolecular circuits, a sensitivity analysis method, and in a compressed sensing context, and discuss the trade-off between enclosure tightness and computational cost.

\end{abstract}
\begin{keyword}
Underdetermined systems \sep Bounded uncertainty \sep Set-based computation \sep Trapezoidal linearization \sep Linear Programming

\end{keyword}
\section{Introduction}
While multiple methods exist to analyse mathematical models, unavoidable uncertainties in the model severely limit the reliability of the obtained solutions, especially when multiple solutions are possible. This issue is prominent in systems and synthetic biology~\cite{alon_introduction_2006,del_vecchio_biomolecular_2015,sontag_new_2004,prakash_design_2024,chorasiya_quantitative_2023}, where, for example, multiple steady-state solutions correspond to different cell types. Biological parameters are usually known only in a range of values, but still need to be chosen to match the output specification. The problem is challenging because the output measurements may themselves be not known exactly. More generally, the issue of reliability with uncertain models and measurements also arises in sensing~\cite{donoho_compressed_2006} and estimation~\cite{garulli_setmembership_1999,milanese_setmembership_2004,scott_constrained_zonotopes_2016}.

There are at least three broad classes of approaches that are used to analyse models. The first class of approaches is point-based, such as those based on the Newton--Raphson method, Gradient Descent algorithms, or generalised inverses~\cite{nocedal2006numerical,ben2003generalized}. Typically, these approaches are operated as iterations, or dynamical systems, even when the equation being solved is a static one. While these approaches can compute candidate solutions, assuming they converge, they do not usually provide guarantees in scenarios with multiple solutions and/or when uncertainties are present. The problem is more pronounced in underdetermined settings when there are more unknown variables than constraint equations.
The second class of approaches is based on sampling an underlying probability distribution, such as in Bayesian inference~\cite{stuart2010inverse} or in Monte Carlo method~\cite{robert1999monte}. While these methods compute solutions and quantify uncertainties statistically, they can be sensitive to modelling choices such as priors and noise assumptions~\cite{patel_assessment_2020,ninness_mcmc_sysid_2010,schon_nonlinear_ss_sysid_2011,pillonetto_kernel_survey_2014}. Further, any nonlinear terms present pose an obstacle in rigorously propagating the probability distributions.
The third class of approaches is based on set-based computations, with Interval Analysis being a particularly attractive framework for these~\cite{moore_introduction_2009,jaulin_applied_2001,tucker2011validated}. Under a bounded set formulation, such methods can offer deterministic outer guarantees by propagating interval bounds and constructing certified enclosures, but most scalable verified solvers are developed for square determined systems~\cite{alamo_zonotopes_2005,briat2016interval,zhang2026distributed}. They often require certain regularity assumptions and may not extend directly to underdetermined settings, motivating dedicated approaches for reliably resolving underdetermined nonlinear equations~\cite{jaulin2026new}. Practical performance can also be limited by interval dependency and weak contraction in strongly coupled models, motivating subdivision and tighter relaxations~\cite{hansen2003global}. For interval linear systems, exact solution characterisation is available via the Oettli--Prager theorem, but available enclosures can be conservative and may not yield scalable component-wise bounds. For nonlinear steady-state design, constraint-propagation methods can contract boxes, but they may stagnate under dependency effects~\cite{prakash_design_2024,jaulin_applied_2001}.
An approach that combines rigour with scalability and generalisability for such uncertain underdetermined system is, to the best of our knowledge, unavailable.

In this paper, we developed a scalable, Linear Programming driven framework to obtain certified enclosures in underdetermined models with bounded, deterministic uncertainty. These algorithms output deterministic outer bounds that provably contain the entire admissible solution set. Our contributions are threefold:
\begin{itemize}
    \item We constructed tight interval-affine inclusions using a novel trapezoidal linearisation method that transforms a possibly nonlinear, uncertain, underdetermined problem into a set of linear inequalities, yielding a reusable linear program for systematic box contraction and subdivision.
    \item We derived a tractable family of linear programs to compute rigorous, component-wise outer enclosures using systematic box contraction and subdivision.
    \item We demonstrated the algorithms for a range of applications including parameter certification for nonlinear steady-state design in biomolecular systems, sensitivity analysis, and robust compressed sensing with bounded measurements.
\end{itemize}
The rest of the paper is structured as follows. Section~\ref{sec:problem_formulation} introduces the enclosure problems. Section~\ref{sec:background} reviews the main tools from interval analysis and recalls the classical parallel linearization idea. Section~\ref{sec:results} develops the main results used to compute certified enclosures, and Section~\ref{sec:applications} illustrates their performance through numerical experiments. Section~\ref{sec:discussion} discusses limitations and scalability, and Section~\ref{sec:conclusions} concludes.
\notation The notation and interval-analysis conventions used throughout are adapted from~\cite{moore_introduction_2009,jaulin_applied_2001}; biological modeling conventions follow~\cite{alon_introduction_2006,del_vecchio_biomolecular_2015}. We denote the set of all closed intervals by $\mathbb{IR}$. Interval-valued quantities are written with closed brackets, e.g., $\iv{x}\in\mathbb{IR}$, while point values are written without brackets.
An interval may also be written explicitly as $\iv{x}=[\underline{x},\overline{x}]$. An $n$-dimensional interval vector (\emph{box}) is $\iv{x}\in\mathbb{IR}^n$; interval matrices are defined analogously, e.g., $\iv{A}\in\mathbb{IR}^{m\times n}$ and $\iv{b}\in\mathbb{IR}^{m}$. For $\iv{x}=[\underline{x},\overline{x}]$, define the width $\operatorname{wid}(\iv{x}):=\overline{x}-\underline{x}$, midpoint $\operatorname{mid}(\iv{x}):=(\underline{x}+\overline{x})/2$, and radius $\operatorname{rad}(\iv{x}):=(\overline{x}-\underline{x})/2$, extended entrywise to interval vectors and matrices.
Throughout, $n$ denotes the number of unknowns and $m$ denotes the number of constraints; ``underdetermined'' means $m<n$. The symbol $x$ denotes states and $u$ denotes parameters; $\iv{x}, \iv{u}, \iv{z}\in\mathbb{IR}^{\cdot}$ denote interval boxes. For linear systems, $(A,b)$ denote point-valued realizations with $(A,b)\in(\iv{A},\iv{b})$, and $\Sigma(\iv{A},\iv{b})$ denotes the induced solution set. For the nonlinear design problem, $\iv{x}_d$ denotes the desired steady-state box and $\iv{u}_0$ the parameter search box, with $\iv{u}_{\mathrm{feas}}(\iv{x}_d,\iv{u}_0)$ the associated feasible-parameter set. In compressed sensing, $s$ denotes the unknown signal, $\iv{\Theta}$ the interval sensing matrix, and $\iv{y}$ the interval measurement vector. The symbol $w$ denotes a generic linear-program decision vector.

\section{Problem formulation}\label{sec:problem_formulation}

This section states the main problem addressed in this paper. We also note two related problems because they can be addressed with the same solution methodology.

\subsection{Motivation}\label{subsec:nonlinear_design_pf}

An important step in nonlinear system design is to fix the steady state through a suitable choice of parameters. This is a common precursor to controller design when a linearisation around this steady state is performed.
In biomolecular systems, for example, steady states are commonly characterized by
\begin{equation}\label{eq:ss_motivation}
  f(x,u) = 0,
\end{equation}
where $x \in \mathbb{R}^n$ denotes the state, $u \in \mathbb{R}^p$ denotes the reaction parameters, and $f : \mathbb{R}^n \times \mathbb{R}^p \to \mathbb{R}^n$ collects information about chemical reactions. Determining the parameters from given steady-state specifications may be underdetermined if $n<p$. This motivates our problem:

Given a parameter search region $\iv{u}_0 \in \mathbb{IR}^p$ and a target steady-state box $\iv{x}_d \in \mathbb{IR}^n$, the design task is to certify an enclosure of all parameters compatible with at least one steady state in $\iv{x}_d$.

\begin{mainproblem}[Nonlinear design]
Fix $\iv{x}_d\in\mathbb{IR}^n$ and $\iv{u}_0\in\mathbb{IR}^p$.
Define
\begin{equation}
\begin{aligned}
  \iv{u}_{\mathrm{feas}}(\iv{x}_d,\iv{u}_0)
  &:= \bigl\{\,u\in \iv{u}_0\subset\mathbb{R}^p \;\big|\; \\
  \qquad \exists\,x\in \iv{x}_d\subset\mathbb{R}^n &\text{ such that } f(x,u)=0\,\bigr\}.
\end{aligned}
\end{equation}
Compute an interval enclosure $\iv{u}\in\mathbb{IR}^p$ such that $\iv{u}_{\mathrm{feas}}(\iv{x}_d,\iv{u}_0) \subseteq \iv{u} \subseteq \iv{u}_0$.

\end{mainproblem}
Although the design constraints are typically not linear in $x$ and $u$, we used a novel Trapezoidal Linearisation technique to recast the nonlinear constraints locally over a box as an interval-affine inclusion. This yields a linear (polyhedral) relaxation in the decision variables that iteratively contracts through successive linear programs. The word `trapezoidal’ invokes the geometric shape where  the nonlinearity is enclosed, over an interval, by two not necessarily parallel lines. The next two remarks highlight other instances where the same underdetermined, bounded uncertainty structure arises. They serve to highlight the generality of this technique.
\begin{remark}\label{rem:ils}
\textbf{(Interval linear systems).} Consider an interval (possibly underdetermined) linear system
\(
  A z = b,
\)
where $z \in \mathbb{R}^n$ is unknown, $A \in \mathbb{R}^{m \times n}$, and $b \in \mathbb{R}^m$.
The coefficients are assumed to lie in intervals $A\in \iv{A}\in\mathbb{IR}^{m \times n}$ and $b\in \iv{b}\in\mathbb{IR}^{m}$, which induces the solution set
\(
  \Sigma(\iv{A}, \iv{b}) := \{ z \in \mathbb{R}^n \mid \exists \, A \in \iv{A}, \, \exists \, b \in \iv{b} \text{ such that } Az = b \}.
\)
\begin{relatedproblem}\label{prob:linear_enclosure}
Find an interval vector $\iv{z} \in \mathbb{IR}^n$ such that $\Sigma(\iv{A},\iv{b}) \subseteq \iv{z}$.
\end{relatedproblem}
\end{remark}

\begin{remark}\label{rem:rcs}
\textbf{(Robust compressed sensing).} Let $s\in\mathbb{R}^n$ be an unknown sparse signal, and suppose that both the sensing operator and the measurements are specified only within intervals: $\iv{\Theta}\in\mathbb{IR}^{m\times n}$ with $m\ll n$ and $\iv{y}\in\mathbb{IR}^m$.
\begin{relatedproblem}\label{prob:robust_cs}
Compute a sparse estimate by solving
\(
\min_{s \in \mathbb{R}^n} \lVert s \rVert_1
\, \text{subject to} \,
\underline{y} \le \Theta s \le \overline{y}, \, \Theta \in\iv{\Theta}.
\)
\end{relatedproblem}
\end{remark}

\begin{remark}\label{rem:scope}
The certificates in this paper are set-theoretic outer guarantees on the search domain: they either exclude a candidate box as infeasible or certify that a returned box encloses the entire feasible set within the search region. These enclosures need not be minimal and may be conservative due to interval dependency.
\end{remark}

\section{Background}\label{sec:background}

This section summarizes the interval-analysis and optimization methods underlying the proposed algorithms, providing the background needed to present enclosure computation and contraction operators in a self-contained way.

\subsection{Interval Analysis}

Interval analysis represents uncertain quantities as intervals and propagates them through computations to obtain guaranteed enclosures.
Arithmetic operations are extended to intervals using a set-theoretic definition:
\begin{equation*}
\iv{a} \star \iv{b}
= \{ a \star b \mid a \in \iv{a},\; b \in \iv{b} \},
\quad \text{for } \star \in \{ +, -, \times, \div \},
\end{equation*}
where division is defined assuming $0 \notin \iv{b}$. A critical property of interval arithmetic is \emph{inclusion isotonicity}: if $\iv{a_1} \subseteq \iv{a_2}$ and $\iv{b_1} \subseteq \iv{b_2}$, then
\(
\iv{a_1} \star \iv{b_1} \subseteq \iv{a_2} \star \iv{b_2}.
\)

\begin{definition}~\cite{moore_introduction_2009}
Let $f:\mathbb{R}^n\to\mathbb{R}^m$.
An \emph{interval extension} is a map $\iv{f}:\mathbb{IR}^n\to\mathbb{IR}^m$ satisfying $f(x)\in \iv{f}(\iv{x})$ for all $x\in \iv{x}$.
The \emph{range} of $f$ over $\iv{x}$ is $R(f;\iv{x}):=\{f(x):x\in \iv{x}\}$.
If $f$ is rational (built from $+,-,\times,\div$), its \emph{natural interval extension} is obtained by replacing real variables/operations with interval variables/operations.
\end{definition}

\begin{theorem}~\cite{moore_introduction_2009}
Given a function $f$ and an inclusion isotonic interval extension $\iv{f}$ defined on $\iv{x} \in \mathbb{IR}^n$:
$(i)$ If $\iv{y} \subseteq \iv{w} \subseteq \iv{x}$, then $\iv{f}(\iv{y}) \subseteq \iv{f}(\iv{w})$.
$(ii)$ $R(f;\iv{x}) \subseteq \iv{f}(\iv{x})$.

\end{theorem}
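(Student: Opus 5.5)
Both parts follow from the definitions; the only care needed is with part $(i)$. I read ``inclusion isotonic'' as the property already noted for the arithmetic operations: whenever $\iv{y}\subseteq\iv{w}$ (componentwise for boxes), $\iv{f}(\iv{y})\subseteq\iv{f}(\iv{w})$. I read ``interval extension'' as in the preceding Definition: $f(x)\in\iv{f}(\iv{x})$ for every $x\in\iv{x}$.

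For $(i)$, I will simply apply isotonicity to the pair $\iv{y}\subseteq\iv{w}$. The hypothesis $\iv{w}\subseteq\iv{x}$ only ensures that both boxes lie in the domain where $\iv{f}$ is defined, so it plays no role beyond that. If $\iv{f}$ is the natural interval extension of a rational $f$, rather than an abstract map assumed isotonic, I will argue by structural induction on the expression tree of $f$:
\begin{itemize}
\item at the leaves, the variables are the identity maps on the boxes (and constants are fixed degenerate intervals), so the inclusion holds trivially;
\item at each node, an operation $\star\in\{+,-,\times,\div\}$ combines sub-results that are nested by the induction hypothesis, and inclusion isotonicity of $\star$ preserves the nesting.
\end{itemize}
The one subtlety is division. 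Since $\iv{w}\subseteq\iv{x}$ and $\iv{f}$ is defined on $\iv{x}$, every denominator interval evaluated on $\iv{w}$ excludes $0$, by the same induction on subexpressions. The same then holds on $\iv{y}$, so all operations are well defined.

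For $(ii)$, take any element of $R(f;\iv{x})$. It has the form $f(x)$ for some $x\in\iv{x}$, and the extension property gives $f(x)\in\iv{f}(\iv{x})$ directly.

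For the natural extension, where this property is not assumed but must be established, I will again induct on the expression tree. Fix $x\in\iv{x}$. Each leaf value lies in its interval, since $x_i\in\iv{x}_i$. At a node, suppose the sub-values $a\in\iv{a}$ and $b\in\iv{b}$; then $a\star b\in\iv{a}\star\iv{b}$ by the set-theoretic definition of interval arithmetic. Propagating this up the tree gives $f(x)\in\iv{f}(\iv{x})$.

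The expected obstacle is only bookkeeping: well-definedness under division, and the fact that $\iv{f}$ depends on the chosen expression rather than on $f$ itself. Neither affects the inclusions, because every step uses only membership and isotonicity at individual nodes.
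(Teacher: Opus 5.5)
Your argument is correct. The paper gives no proof of this statement; it is quoted from Moore et al. Under the paper's own Definition, an interval extension must satisfy $f(x)\in\iv{f}(\iv{x})$ for all $x\in\iv{x}$. So part (ii) is that defining property, and part (i) is the definition of inclusion isotonicity, exactly as you observe.

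One point of comparison is worth knowing. The cited source uses a weaker notion of interval extension: agreement on degenerate boxes, $\iv{f}([x,x])=f(x)$. There, (ii) is the fundamental theorem of interval analysis. It is proved by applying isotonicity to $[x,x]\subseteq\iv{x}$, which gives $f(x)=\iv{f}([x,x])\subseteq\iv{f}(\iv{x})$. That is the one place the isotonicity hypothesis does real work. Your direct route is valid under the paper's stronger definition but leaves that hypothesis unused for (ii). Under the source's definition, your main line would need this extra one-line step.

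Your supplementary structural induction for natural extensions is also correct. It reproduces the source's lemma that rational interval functions are inclusion isotonic, together with propagation of point membership up the expression tree. Your division bookkeeping is right as well: denominators evaluated on $\iv{w}$ and $\iv{y}$ are subsets of those evaluated on $\iv{x}$, hence exclude $0$.
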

Natural extensions guarantee that a function’s true range lies within the computed bounds, but they can be conservative due to the dependency problem (a variable appearing multiple times in an expression). For Lipschitz continuous functions, this overestimation can be made arbitrarily small by subdividing the domain~\cite{bravo_interval_invariant_2005,scott_reachable_bounds_2013}.

\begin{theorem}~\cite{moore_introduction_2009}
If $f$ is Lipschitz continuous and
\(
\iv{x} = \bigcup_{i=1}^N \iv{x}_i,
\)
then
\(
R(f;\iv{x}) \subseteq \bigcup_{i=1}^N \iv{f}(\iv{x}_i) \subseteq \iv{f}(\iv{x}).
\)
Moreover, the excess width of enclosure tends to zero as the subdivision widths $\iv{x}_i$ go to zero.
\end{theorem}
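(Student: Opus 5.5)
The statement has three parts. The first inclusion $R(f;\iv{x}) \subseteq \bigcup_{i} \iv{f}(\iv{x}_i)$ is a covering argument; the second inclusion follows from inclusion isotonicity; and the width claim uses Lipschitz continuity. I will make the standing hypothesis explicit: $\iv{f}$ is an inclusion isotonic interval extension of $f$, as in the preceding theorem. In addition, $\iv{f}$ is assumed Lipschitz as an interval extension, meaning there is a constant $L$ with $\operatorname{wid}(\iv{f}(\iv{y})) \le L\,\operatorname{wid}(\iv{y})$ for all boxes $\iv{y}\subseteq\iv{x}$, measured in the max norm. This is the standard reading in Moore's framework, where natural extensions of rational $f$ on boxes avoiding singularities have this property. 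Lipschitz continuity of $f$ alone would not control the extension.

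\textbf{Step 1 (first inclusion).} Take $y\in R(f;\iv{x})$, so $y=f(x)$ for some $x\in\iv{x}$. Since $\iv{x}=\bigcup_i \iv{x}_i$, there is an index $i$ with $x\in\iv{x}_i$. Part (ii) of the preceding theorem, applied on $\iv{x}_i$, gives $f(x)\in R(f;\iv{x}_i)\subseteq\iv{f}(\iv{x}_i)$. Hence $R(f;\iv{x})\subseteq\bigcup_i\iv{f}(\iv{x}_i)$.

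\textbf{Step 2 (second inclusion).} Each $\iv{x}_i\subseteq\iv{x}$, so part (i) of the preceding theorem, with $\iv{y}=\iv{x}_i$ and $\iv{w}=\iv{x}$, gives $\iv{f}(\iv{x}_i)\subseteq\iv{f}(\iv{x})$. Taking the union over $i$ yields the second inclusion.

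\textbf{Step 3 (excess width tends to zero).} This is the main obstacle, because it needs a precise notion of excess width. I define it as
\[
w_N := \max_i \operatorname{wid}\!\left(\iv{f}(\iv{x}_i)\right) - \operatorname{wid}\!\left(R(f;\iv{x}_i)\right),
\]
or equivalently as the Hausdorff distance between $\bigcup_i\iv{f}(\iv{x}_i)$ and $R(f;\iv{x})$.

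For each piece, note that $R(f;\iv{x}_i)\subseteq\iv{f}(\iv{x}_i)$. Also, $\iv{f}(\iv{x}_i)$ has width at most $L\max_i\operatorname{wid}(\iv{x}_i)$ by the Lipschitz property of the extension. Hence every point of $\iv{f}(\iv{x}_i)$ lies within distance $L\max_i\operatorname{wid}(\iv{x}_i)$ of the point $f(x)$ for any $x\in\iv{x}_i$, and therefore within that distance of $R(f;\iv{x})$.

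It follows that the Hausdorff distance from $\bigcup_i\iv{f}(\iv{x}_i)$ to $R(f;\iv{x})$ is at most $L\max_i\operatorname{wid}(\iv{x}_i)$. The reverse direction contributes zero by Step 1. So the excess width is $O(\max_i\operatorname{wid}(\iv{x}_i))$, which tends to zero as the subdivision is refined. This is linear convergence in the mesh width.

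If one insists on only assuming $f$ Lipschitz, I would instead invoke the Lipschitz property of natural extensions of rational $f$ on compact domains avoiding division by zero, which holds by induction over the operations. Securing this hypothesis is the step I expect to require care.
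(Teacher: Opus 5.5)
The paper gives no proof of its own (it defers to the cited Moore et al. text), and your argument is correct and essentially the standard refinement proof from that source: covering plus the extension property for the first inclusion, isotonicity for the second, and a bound $\operatorname{wid}(\iv{f}(\iv{y}))\le L\,\operatorname{wid}(\iv{y})$ on the extension for the width claim. You are also right that this Lipschitz property of $\iv{f}$, and not merely of $f$, is the hypothesis actually needed. The only slip is calling your two notions of excess width equivalent; they are not, but the usual one (the width of the hull of $\bigcup_i\iv{f}(\iv{x}_i)$ minus that of $R(f;\iv{x})$) is at most twice your Hausdorff bound, i.e. at most $2L\max_i\operatorname{wid}(\iv{x}_i)\to 0$, so nothing breaks.
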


\subsection{Interval linear systems}\label{sec:ils}

Consider an interval linear system with uncertain coefficients $Az=b$, where $A\in \iv{A}\in\mathbb{IR}^{m\times n}$ and $b\in \iv{b}\in\mathbb{IR}^m$. Define the associated solution set
\(
\Sigma(\iv{A},\iv{b}) := \{ z \in \mathbb{R}^n \mid \exists \, A \in \iv{A}, \, \exists \, b \in \iv{b} \text{ such that } Az = b \}.
\)
The following theorem provides an exact membership test in terms of the midpoint and radius of the interval data.
\begin{theorem}[Oettli \& Prager~\cite{oettli_compatibility_1964}]\label{theorem:oettli_prager}
A vector $z$ belongs to $\Sigma(\iv{A},\iv{b})$ if and only if
\(\label{eq:oettli_prager}
  \bigl\lvert \operatorname{mid}(\iv{A}) z - \operatorname{mid}(\iv{b}) \bigr\rvert
  \;\leq\; \operatorname{rad}(\iv{A})\,|z| + \operatorname{rad}(\iv{b}).
\)
\end{theorem}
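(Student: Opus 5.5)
We prove the two inclusions separately. Throughout, write $A_c:=\operatorname{mid}(\iv{A})$, $\Delta:=\operatorname{rad}(\iv{A})\ge 0$, $b_c:=\operatorname{mid}(\iv{b})$ and $\delta:=\operatorname{rad}(\iv{b})\ge 0$. Then $A\in\iv{A}$ is equivalent to $|A-A_c|\le\Delta$ entrywise, and $b\in\iv{b}$ is equivalent to $|b-b_c|\le\delta$.

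\emph{Necessity.} Suppose $z\in\Sigma(\iv{A},\iv{b})$, so $Az=b$ for some $A\in\iv{A}$ and $b\in\iv{b}$. Then
\[
A_c z-b_c=(A_c-A)z+(b-b_c).
\]
Taking absolute values entrywise and using the triangle inequality gives
\[
|A_c z-b_c|\le|A_c-A|\,|z|+|b-b_c|\le\Delta|z|+\delta,
\]
which is the claimed inequality. This direction is routine.

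\emph{Sufficiency.} Suppose $|A_cz-b_c|\le\Delta|z|+\delta$. We construct explicit $A\in\iv{A}$ and $b\in\iv{b}$ with $Az=b$, working row by row. For row $i$, define the residual $r_i:=(A_cz-b_c)_i$ and the slack $s_i:=(\Delta|z|+\delta)_i\ge 0$.

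\begin{itemize}
\item If $s_i=0$, then $r_i=0$, so we take $A_{i\cdot}=(A_c)_{i\cdot}$ and $b_i=(b_c)_i$.
\item Otherwise, set $y_i:=r_i/s_i\in[-1,1]$, and let $T_z:=\operatorname{diag}(\operatorname{sign}(z_j))$, with $\operatorname{sign}(0)=1$ say, so that $T_z z=|z|$. Define
\[
A:=A_c-\operatorname{diag}(y)\,\Delta\,T_z,\qquad b:=b_c+\operatorname{diag}(y)\,\delta,
\]
setting $y_i=0$ in rows with $s_i=0$.
\end{itemize}

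Since $|y_i|\le 1$, we get $|A-A_c|\le\Delta$ and $|b-b_c|\le\delta$, so $A\in\iv{A}$ and $b\in\iv{b}$. Furthermore,
\[
Az-b=(A_cz-b_c)-\operatorname{diag}(y)(\Delta|z|+\delta)=r-\operatorname{diag}(y)s=0,
\]
so $z\in\Sigma(\iv{A},\iv{b})$.

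The main point needing care is this sufficiency construction. The key idea is to redistribute the residual of each row proportionally over the available radii, with the sign of each entry aligned to $z$ via $T_z$. The degenerate case $s_i=0$ must be handled separately, as above. Nothing here requires $m=n$ or any regularity of $\iv{A}$, because each row is treated independently. This is why the characterisation applies directly to the underdetermined setting of Related problem~\ref{prob:linear_enclosure}.
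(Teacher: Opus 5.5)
Your proof is correct and complete. The paper gives no proof of this statement (it is imported from Oettli and Prager and used as a black box), so there is no internal argument to match. Your argument is the standard constructive one: the entrywise triangle inequality for necessity, and the explicit witness $A=A_c-\operatorname{diag}(y)\Delta T_z$, $b=b_c+\operatorname{diag}(y)\delta$ for sufficiency. Your handling of rows with $s_i=0$ and of zero entries of $z$ is right.

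The nearest reasoning in the paper is the endpoint argument in Theorem~\ref{theorem:int_linearsystem}(ii). Applied here, it reads as follows. For fixed $z$, row $i$ of $Az-b$ is affine in entries that no other row uses. Its extremes $r_i\pm s_i$ over the box occur at $a_{ij}=(A_c)_{ij}\pm\Delta_{ij}\operatorname{sign}(z_j)$, $b_i=(b_c)_i\mp\delta_i$, and $0$ lies between them iff $|r_i|\le s_i$. That route is shorter and mirrors the paper's endpoint-inequality reformulations. However, it needs the range to fill the whole interval, which rests on connectedness of the box.

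Your construction is exactly the interpolation $(A_c-\lambda\Delta T_z,\,b_c+\lambda\delta)$, $\lambda\in[-1,1]$, between those two extreme vertices, evaluated at $\lambda=y_i$ separately in each row. So it makes that step explicit and yields a concrete certificate $(A,b)$. It also needs no square or regularity assumption, which is what the underdetermined uses in Proposition~\ref{prop:op_lp} and Corollary~\ref{cor:op_lp_general} require.
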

Inequality~\eqref{eq:oettli_prager} serves as the computational basis for the interval-linear results. In Section~\ref{subsec:lp_background}, it is reformulated as linear constraints (via sign-splitting into nonnegative variables) and used within linear programs to obtain certified componentwise bounds.

\subsection{Parallel Linearization Approach~\cite{jaulin_applied_2001}}

Parallel linearization provides a standard mechanism for converting a nonlinear steady-state constraint $f(z)=0$ on a box $\iv{z}\in\mathbb{IR}^n$ into an interval-affine inclusion. For a continuously differentiable mapping $f:\mathbb{R}^n\to\mathbb{R}^m$ (typically with $n>m$), the method brackets $f$ between two affine maps,
\begin{equation}\label{Eqn_parallel_program}
  Az + \underline{b} \le f(z) \le Az + \overline{b}.
\end{equation}
\begin{assumption}\label{ass:differentiable}
The map $f$ is continuously differentiable on the current search box $\iv{z}$, and an inclusion-isotonic interval extension of its Jacobian, $\iv{J_f}(\iv{z})$, is available.
\end{assumption}

The parallel-linearization construction is based on the standard mean-value form of a differentiable mapping.
\begin{theorem}[Mean value form]~\cite{tucker2011validated}\label{thm:mvt}
Let $g:\iv{z}\to\mathbb{R}^{m}$ be continuously differentiable on a box
$\iv{z}\subseteq\mathbb{R}^{n}$. For every $z\in\iv{z}$ and every
$i\in\{1,\dots,m\}$ there exists a point $\xi^{i}$ on the segment $[z_m,z]$
such that
\begin{equation}\label{eq:mvt}
g_i(z)=g_i(z_m)+\nabla g_i(\xi^{i})^{\top}(z-z_m).
\end{equation}
Consequently, if $\iv{J}$ encloses $\{\nabla g(\zeta):\zeta\in\iv{z}\}$, then
$g(z)\in g(z_m)+\iv{J}\,(z-z_m)$ for all $z\in\iv{z}$.
\end{theorem}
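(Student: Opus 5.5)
The plan is to prove the two claims of Theorem~\ref{thm:mvt} in turn: first the scalar identity \eqref{eq:mvt} for each component, then the set inclusion. Throughout, $z_m$ is a fixed point of the box $\iv{z}$ (typically $\operatorname{mid}(\iv{z})$). The key structural fact is that a box is convex, so the segment $[z_m,z]$ lies in $\iv{z}$ whenever $z\in\iv{z}$.

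\emph{Step 1: the scalar identity.} Fix $z\in\iv{z}$ and a component $i\in\{1,\dots,m\}$. I would restrict $g_i$ to the segment by setting
\[
\varphi(t):=g_i\bigl(z_m+t(z-z_m)\bigr),\qquad t\in[0,1].
\]
Because $g$ is continuously differentiable on $\iv{z}$ and the segment stays in $\iv{z}$, $\varphi$ is continuously differentiable on $[0,1]$. The chain rule gives $\varphi'(t)=\nabla g_i\bigl(z_m+t(z-z_m)\bigr)^{\top}(z-z_m)$. The one-dimensional mean value theorem then yields some $\tau_i\in(0,1)$ with $\varphi(1)-\varphi(0)=\varphi'(\tau_i)$. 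Setting $\xi^i:=z_m+\tau_i(z-z_m)$ gives \eqref{eq:mvt}. The point $\xi^i$ depends on $i$, which is why the statement is made componentwise: there is no vector-valued mean value equality with a single intermediate point.

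\emph{Step 2: the set inclusion.} Since $\xi^i\in[z_m,z]\subseteq\iv{z}$, the gradient row $\nabla g_i(\xi^i)^{\top}$ lies in the $i$-th row of $\iv{J}$, because $\iv{J}$ encloses all gradients over $\iv{z}$. Now form the real matrix $J^\ast$ whose $i$-th row is $\nabla g_i(\xi^i)^{\top}$. Each row of $J^\ast$ is drawn from the corresponding row of $\iv{J}$, and an interval matrix is a Cartesian product of entrywise intervals, so $J^\ast\in\iv{J}$. Stacking the identities from Step~1 gives $g(z)=g(z_m)+J^\ast(z-z_m)$. Hence $g(z)$ lies in $\{g(z_m)+J(z-z_m):J\in\iv{J}\}$, and this set is contained in the interval evaluation $g(z_m)+\iv{J}(z-z_m)$ by the set-theoretic definition of interval arithmetic and inclusion isotonicity.

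\emph{Main obstacle.} The delicate point is exactly this passage from different intermediate points $\xi^i$ per row to a single enclosing interval matrix. It works only because interval matrices are closed under row-wise (indeed entrywise) selection. For a non-box Jacobian enclosure, such as a general convex set of matrices, this step would fail. So the proof should state explicitly that $\iv{J}\in\mathbb{IR}^{m\times n}$ and that it encloses $\nabla g$ over the whole box $\iv{z}$, not merely at $z_m$; in particular, any $\iv{J_f}(\iv{z})$ as in Assumption~\ref{ass:differentiable} qualifies.
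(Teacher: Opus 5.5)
Your proof is correct and is the standard argument for this result. The paper does not prove Theorem~\ref{thm:mvt} itself (it cites Tucker), but its proof of Theorem~\ref{theorem:int_linearsystem} uses exactly your Step~2 reasoning, placing each row gradient $\nabla_u f_i(x,\xi^i)$ in row $i$ of $\iv{J}$. One caveat on your closing remark: the row-wise selection step does need the box structure, but the inclusion itself holds for any closed convex enclosure of the Jacobians via the integral form $g(z)-g(z_m)=\bigl(\int_0^1 \nabla g(z_m+t(z-z_m))\,dt\bigr)(z-z_m)$, so the box structure is a convenience of your route rather than a necessity.
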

Let $z_m\in \iv{z}$ (e.g., $z_m:=\operatorname{mid}(\iv{z})$). By Theorem~\ref{thm:mvt}, for each $z\in \iv{z}$ there exists $\xi\in \iv{z}$ such that
\(
  f(z)
  = f(z_m) + J_f(\xi)(z-z_m) 
  = J_f(z_m)z + f(z_m) - J_f(z_m)z_m + \bigl(J_f(\xi)-J_f(z_m)\bigr)(z-z_m).
\)
Equivalently, $f(z)=Az+b$ with
\(\label{eqn:pla_matrixA}
  A = J_f(z_m),
\)
\(\label{eqn:pla_matrixb}
  b = f(z_m) - Az_m + \bigl(J_f(\xi)-A\bigr)(z-z_m).
\)
An external approximation of $f(z)=0$ on $\iv{z}$ is therefore given by
\begin{equation}
\left\{
\begin{aligned}
& Az + b = 0,\,
A = J_f(z_m),\,
b = f(z_m) - Az_m + \\ &\bigl(J_f(\xi)-A\bigr)(z-z_m),\, z\in \iv{z},\ \xi \in \iv{z}.
\end{aligned}
\right.
\end{equation}
where $\iv{b}=[\underline{b},\overline{b}]$ in~\eqref{Eqn_parallel_program} is any interval enclosure of the admissible values of $b$ for $z,\xi\in \iv{z}$.

\subsection{Linear Programming Approach~\cite{jaulin_applied_2001}}
\label{subsec:lp_background}

This subsection presents linear-program formulations for the nonlinear relaxations (as interval-affine inequalities) and for interval linear systems (via the Oettli--Prager characterization), and uses them to compute certified componentwise enclosures. In both settings, linear programming provides certified componentwise bounds for the feasible set.

Two standard modeling/regularity conditions are imposed for the remainder of the paper. First, since the state and parameter variables represent concentrations and reaction rates, they are restricted componentwise to the nonnegative orthant. This restriction yields an endpoint-based (linear-inequality) representation of interval-affine inclusions: for $z\ge 0$, the inclusion $0\in \iv{A}z+\iv{b}$ with $\iv{A}=[\underline{A},\overline{A}]$ and $\iv{b}=[\underline{b},\overline{b}]$ is equivalent to
\begin{equation}\label{eq:int_affine_lin_ineq_bg}
\underline{A}z+\underline{b}\le 0,\qquad -\overline{A}z-\overline{b}\le 0.
\end{equation}
These inequalities are stacked into a constraint matrix $Mz\le q$.

Second, to ensure that the componentwise bounding procedure is well defined, it is assumed that for each component $i\in\{1,\ldots,n\}$ the linear programs used to minimize and maximize $z_i$ over the current feasible set are either infeasible or attain finite optimal values.
Given linear constraints of the form $Mz\le q$ together with a search box $z\in \iv{z}\in\mathbb{IR}^n$, a certified axis-aligned enclosure can be computed by solving $2n$ linear programs

\begin{equation}\label{eq:lp_enclosure_template}
\underline{z}_i:=\min\ z_i,\qquad \overline{z}_i:=\max\ z_i,\qquad i=1,\ldots,n,
\end{equation}
subject to $Mz\le q$ and $z\in \iv{z}$.

\section{Results}\label{sec:results}

The interval-valued specification $\iv{x}_d$ in the nonlinear design problem necessitates a variation of existing linearisation approaches. Because the Jacobian is interval-valued, a parallel linearisation that bounds $f$ between two affine maps with the same point-valued slope is no longer applicable (see (3)). Further, the system is underdetermined, so a Newton linearisation is also not directly applicable~\cite{jaulin_applied_2001}. We therefore bracket the nonlinear function between two affine maps with possibly different slopes, forming a trapezoidal structure. This trapezoidal linearisation is used to contract the parameter search box $\iv{u}_0$ while retaining every $u$ for which there exists $x\in\iv{x}_d$ such that $f(x,u)=0$. The result is a certified outer enclosure of feasible sets defined by underdetermined constraints with interval uncertainty. The precise statement is given in Theorem~\ref{theorem:int_linearsystem} below. The theorem also shows that, for nonnegative variables, the resulting interval-affine inclusion can be solved efficiently using linear inequalities.

\begin{theorem}\label{theorem:int_linearsystem}
Let $f:\mathbb{R}^n\times\mathbb{R}^p\to\mathbb{R}^m$ be continuously differentiable on $\iv{x}_d\times\iv{u}_0$, and let $\iv{f}$ and $\iv{J_u}$ denote inclusion-isotonic interval extensions of $f$ and of $\partial f/\partial u$, respectively. Fix a target steady-state box $\iv{x}_d\in\mathbb{IR}^n$ and a parameter search box $\iv{u}_0\in\mathbb{IR}^p$. For any parameter box $\iv{u}\subseteq\iv{u}_0$, define $u_m:=\operatorname{mid}(\iv{u})$ and
\begin{align}
&\iv{J} := \iv{J_u}(\iv{x}_d,\iv{u}),\quad \iv{A} := \iv{J_u}(\iv{x}_d,u_m),\label{eq:Adef}\\
&\iv{b} := \iv{f}(\iv{x}_d,u_m)-\iv{A}u_m+\bigl(\iv{J}-\iv{A}\bigr)\bigl(\iv{u}-u_m\bigr).\label{eq:bdef}
\end{align}
(i) If $u\in\iv{u}$ and $f(x,u)=0$ for some $x\in\iv{x}_d$, then
\begin{equation}\label{eq:incl}
0\in \iv{f}(\iv{x}_d,u) \subseteq \iv{A}u+\iv{b}.
\end{equation}
(ii) Moreover, if $u\ge 0$ componentwise and $\iv{A}=[\underline{A},\overline{A}]$, $\iv{b}=[\underline{b},\overline{b}]$ are interpreted entrywise, then
\begin{equation}\label{eq:int_affine_lin_ineq}
0\in \iv{A}u+\iv{b}
\quad\Longleftrightarrow\quad
\underline{A}u+\underline{b}\le 0,\quad -\overline{A}u-\overline{b}\le 0,
\end{equation}
with all inequalities interpreted componentwise.
\end{theorem}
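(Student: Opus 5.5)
The plan is to prove (i) as a chain of interval inclusions, with the mean-value form (Theorem~\ref{thm:mvt}) applied in the $u$-variable only and $x$ ranging over the whole box $\iv{x}_d$. Part (ii) is then a direct endpoint computation that uses $u\ge 0$.

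For (i), fix $u\in\iv{u}$ and suppose $f(x,u)=0$ for some $x\in\iv{x}_d$. The first inclusion, $0\in\iv{f}(\iv{x}_d,u)$, follows from the range-enclosure property of inclusion-isotonic extensions (part (ii) of the theorem from~\cite{moore_introduction_2009}): $0=f(x,u)\in R(f(\cdot,u);\iv{x}_d)\subseteq\iv{f}(\iv{x}_d,u)$.

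For the second inclusion, $\iv{f}(\iv{x}_d,u)\subseteq\iv{A}u+\iv{b}$, I would argue algebraically at the level of interval expressions, in four steps.

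\textbf{Step 1.} For point $u$, evaluate $\iv{f}(\iv{x}_d,u)$ through its centred (mean-value) representation about $u_m$:
\[
\iv{f}(\iv{x}_d,u)\subseteq \iv{f}(\iv{x}_d,u_m)+\iv{J}\,(u-u_m).
\]
At the point level this is exactly Theorem~\ref{thm:mvt} applied to $g=f(x,\cdot)$ on $\iv{u}$, for each $x\in\iv{x}_d$, with $\iv{J}=\iv{J_u}(\iv{x}_d,\iv{u})$ enclosing every $\partial f/\partial u(x,\xi)$.

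\textbf{Step 2.} Since $u_m\in\iv{u}$, isotonicity gives $\iv{A}=\iv{J_u}(\iv{x}_d,u_m)\subseteq\iv{J}$. Hence $\iv{J}\subseteq\iv{A}+(\iv{J}-\iv{A})$, and subdistributivity yields
\[
\iv{J}(u-u_m)\subseteq \iv{A}(u-u_m)+(\iv{J}-\iv{A})(u-u_m).
\]

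\textbf{Step 3.} Use subdistributivity again, $\iv{A}(u-u_m)\subseteq\iv{A}u-\iv{A}u_m$, and isotonicity in the last factor, $(\iv{J}-\iv{A})(u-u_m)\subseteq(\iv{J}-\iv{A})(\iv{u}-u_m)$, which holds because $u\in\iv{u}$.

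\textbf{Step 4.} Collect the terms: the right-hand side is exactly $\iv{A}u+\iv{b}$ with $\iv{b}$ as in~\eqref{eq:bdef}. Together with the first inclusion, this gives~\eqref{eq:incl}.

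The main obstacle is Step~1. For a completely arbitrary inclusion-isotonic $\iv{f}$, the set $\iv{f}(\iv{x}_d,u)$ might be wider than the centred form, and then the containment could fail. I would handle this by making explicit that $\iv{f}(\iv{x}_d,u)$ is evaluated in the mean-value form of Theorem~\ref{thm:mvt}. This is the form that the construction~\eqref{eq:Adef}--\eqref{eq:bdef} is built from, and its validity as an extension follows from that theorem. If that is not acceptable, I would replace $\iv{f}(\iv{x}_d,u)$ by its intersection with the centred form. This intersection is still an interval extension, and it makes Step~1 hold by construction.

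For (ii), take $u\ge0$ componentwise. Each product $[\underline{a}_{ij},\overline{a}_{ij}]\,u_j$ equals $[\underline{a}_{ij}u_j,\overline{a}_{ij}u_j]$, since multiplying by a nonnegative scalar preserves the order of the endpoints. Summing over $j$ and adding $\iv{b}_i$ gives, row by row,
\[
(\iv{A}u+\iv{b})_i=[(\underline{A}u+\underline{b})_i,\,(\overline{A}u+\overline{b})_i].
\]
A real interval $[\ell,r]$ contains $0$ if and only if $\ell\le0$ and $-r\le0$. Applying this to each row yields the equivalence~\eqref{eq:int_affine_lin_ineq}.
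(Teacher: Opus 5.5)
Your Steps 2--4 and your argument for (ii) match the paper's. Step 1, however, is a real gap, and it cannot be closed under the stated hypotheses. The middle inclusion $\iv{f}(\iv{x}_d,u)\subseteq\iv{A}u+\iv{b}$ in \eqref{eq:incl} is false for a general inclusion-isotonic $\iv{f}$, exactly as you feared. Here is a counterexample:
take $n=m=p=1$, $f(x,u)=u-x$, and $\iv{x}_d=\iv{u}=\iv{u}_0=[0,1]$, so $u_m=\tfrac12$;
let $\iv{f}$ be the natural extension of the expression $u-x+u(x-x)$, and let $\iv{J_u}\equiv[1,1]$.
Then $\iv{A}=\iv{J}=1$, $\iv{f}(\iv{x}_d,\tfrac12)=[-1,1]$ and $\iv{b}=[-\tfrac32,\tfrac12]$.
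At $u=x=1$ we have $f(x,u)=0$, yet $\iv{f}(\iv{x}_d,1)=[-1,2]\not\subseteq[-\tfrac12,\tfrac32]=\iv{A}u+\iv{b}$.
Your two repairs are declaring $\iv{f}(\iv{x}_d,u)$ to be the centred form, or intersecting it with the centred form. Both make Step 1 true only by changing which extension the statement is about, so they do not prove the theorem as written.

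What is provable is the pair of memberships $0\in\iv{f}(\iv{x}_d,u)$ and $0\in\iv{A}u+\iv{b}$, and that is all the LP construction downstream uses. The paper obtains the second without ever passing through the set $\iv{f}(\iv{x}_d,u)$. It applies Theorem~\ref{thm:mvt} to $f(x,\cdot)$ at the particular feasible $x$, giving $0=f(x,u)\in\iv{f}(\iv{x}_d,u_m)+\iv{J}(u-u_m)$; this is your point-level remark in Step 1. It then performs your Steps 2--4 on this single point. The paper's proof says it yields \eqref{eq:incl}, but it too establishes only $0\in\iv{A}u+\iv{b}$. So the right fix is to run your argument on the point $0=f(x,u)$ and obtain $0\in\iv{f}(\iv{x}_d,u)$ separately from range enclosure, as you already do. You should then state explicitly that the middle inclusion requires an extra assumption on $\iv{f}$, such as the ones you proposed. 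Incidentally, Step 2 does not need $\iv{A}\subseteq\iv{J}$: $\iv{J}\subseteq\iv{A}+(\iv{J}-\iv{A})$ holds for any nonempty $\iv{A}$, which is how the paper argues by fixing some $A_1\in\iv{A}$.
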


\begin{proof}
Let $x\in\iv{x}_d$ and $u\in\iv{u}\subseteq\iv{u}_0$ satisfy $f(x,u)=0$. Applying Theorem~\ref{thm:mvt} to the mapping $f(x,\cdot)$ on $\iv{u}$ yields points $\xi^{i}$ on the segment $[u_m,u]$ such that
$f_i(x,u)=f_i(x,u_m)+\nabla_u f_i(x,\xi^{i})^{\top}(u-u_m)$. Each gradient lies in row $i$ of $\iv{J}$, hence
\begin{equation}\label{eq:mvincl}
0=f(x,u)\in\iv{f}(\iv{x}_d,u_m)+\iv{J}\,(u-u_m).
\end{equation}
Choose $A_1\in\iv{A}$ and write $J=A_1+(J-A_1)$ for any $J\in\iv{J}$. Then $A_1(u-u_m)\in\iv{A}(u-u_m)$, $J-A_1\in\iv{J}-\iv{A}$, and $u-u_m\in\iv{u}-u_m$, which implies
\(
\iv{J}(u-u_m)\subseteq\iv{A}(u-u_m)+\bigl(\iv{J}-\iv{A}\bigr)(\iv{u}-u_m).
\)
By subdistributivity, $\iv{A}(u-u_m)\subseteq\iv{A}u-\iv{A}u_m$. Substituting these inclusions into~\eqref{eq:mvincl} together with~\eqref{eq:bdef} yields~\eqref{eq:incl}.

Let $u\ge 0$ componentwise. Under the entrywise (box) uncertainty model, each component of $Au+b$ is affine and monotone in every uncertain entry of $A$ and $b$ for fixed $u\ge 0$, and therefore its extrema are attained at the interval endpoints. Hence, $0\in\iv{A}u+\iv{b}$ holds if and only if the endpoint inequalities in~\eqref{eq:int_affine_lin_ineq} are satisfied.
\end{proof}

\begin{remark}\label{rem:infeasibility_lp}
Infeasibility of any linear program certifies that the current box $\iv{u}$ contains no solution.
\end{remark}

\begin{remark}\label{rem:tlv_iterations}
Algorithm~\ref{alg:tlv1} describes a single LP-based contraction round (Variant~1). 
In practice, the contraction round is iterated: starting from $\iv{u}_0$, each pass computes bounds $[\underline{u}_k,\overline{u}_k]$ via the LP family. Equivalently, define the LP contraction operator
\(C_{\mathrm{LP}}(\iv{u}_k,\iv{A}_k,\iv{b}_k):=[\underline{u}_k,\overline{u}_k]\),
where $\iv{A}_k,\iv{b}_k$ are the current trapezoidal interval-affine enclosure data. The update is then
\(\iv{u}_{k+1}:=C_{\mathrm{LP}}(\iv{u}_k,\iv{A}_k,\iv{b}_k)\cap\iv{u}_k\).
If the feasible-parameter set is nonempty, then this update produces a nested sequence
\(\iv{u}_0\supseteq\iv{u}_1\supseteq\cdots\supseteq\iv{u}_k\supseteq\iv{u}_{k+1}\supseteq\cdots\).
Moreover, by inclusion isotonicity of the interval extensions, recomputing the trapezoidal enclosure on the tightened box yields nested enclosures $\iv{A}_0\supseteq\iv{A}_1\supseteq\cdots$ and $\iv{b}_0\supseteq\iv{b}_1\supseteq\cdots$.
Consequently, the interval widths are nonincreasing componentwise, i.e., $\operatorname{wid}(\iv{u}_{k+1})\le \operatorname{wid}(\iv{u}_k)$, and hence $\operatorname{wid}(\iv{u}_{k+1})/\operatorname{wid}(\iv{u}_k)\le 1$ whenever the denominator is nonzero. Iteration continues until a stopping criterion is met (e.g., no further contraction or a maximum number of passes).

\end{remark}
With entrywise (box) uncertainty, rewriting $0\in\iv{A}u+\iv{b}$ as the linear inequalities in~\eqref{eq:int_affine_lin_ineq} is exact and adds no conservatism. Any looseness comes only from the interval extensions $\iv{f}$ and $\iv{J}$. The next proposition bounds the extra enclosure width they introduce, and thus the potential conservatism of the resulting LP constraints.
\begin{proposition}\label{prop:width}
Let $\iv{A}$, $\iv{b}$ and $\iv{J}$ be as in
Theorem~\ref{theorem:int_linearsystem}. Write
$\delta:=\operatorname{rad}\bigl(\iv{J_u}(\iv{x}_d,u_m)\bigr)\in\mathbb{R}^{m\times p}_{\ge0}$,
$r:=\operatorname{rad}(\iv{u})\in\mathbb{R}^{p}_{\ge0}$ and
$\rho:=\lVert r\rVert_{\infty}$, and suppose $J_u$ is Lipschitz in its second argument, so that there is $\Lambda\in\mathbb{R}^{m\times p}_{\ge0}$ with
\(
q\bigl(\iv{J}_{ij},\,\iv{A}_{ij}\bigr)\le\Lambda_{ij}\,\rho
\qquad\text{for all }i,j,
\)
where $q$ denotes the Hausdorff distance between intervals. Then

\begin{equation}\label{eq:widthbound}
\operatorname{rad}\Bigl(\bigl(\iv{J}-\iv{A}\bigr)(\iv{u}-u_m)\Bigr)
\;\le\;\bigl(2\delta+2\Lambda\rho\bigr)\,r ,
\end{equation}
with the matrix bound read entrywise and the right-hand side a matrix--vector
product.
\end{proposition}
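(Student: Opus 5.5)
The plan is to reduce the claim to scalar interval facts, working entrywise. Write $\iv{D}:=\iv{J}-\iv{A}$ and $\iv{v}:=\iv{u}-u_m$. Since $u_m=\operatorname{mid}(\iv{u})$, each component is symmetric, $\iv{v}_j=[-r_j,r_j]$. The $i$-th component of the product is
\(
\bigl(\iv{D}\,\iv{v}\bigr)_i=\sum_{j=1}^{p}\iv{D}_{ij}\,\iv{v}_j ,
\)
and radius is additive under interval addition. So it suffices to bound $\operatorname{rad}(\iv{D}_{ij}\iv{v}_j)$ by $(2\delta_{ij}+2\Lambda_{ij}\rho)\,r_j$ for each $i,j$, and then sum over $j$ to obtain the entrywise matrix--vector bound in~\eqref{eq:widthbound}.

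For the scalar step I will use the standard fact that, for a symmetric interval $[-r_j,r_j]$ and any interval $\iv{d}$,
\(
\iv{d}\,[-r_j,r_j]=|\iv{d}|\,[-r_j,r_j],
\)
where $|\iv{d}|:=\max\{|\underline{d}|,|\overline{d}|\}$ is the magnitude. Hence $\operatorname{rad}(\iv{D}_{ij}\iv{v}_j)=|\iv{D}_{ij}|\,r_j$, and it remains to show
\(
|\iv{J}_{ij}-\iv{A}_{ij}|\le 2\delta_{ij}+2\Lambda_{ij}\rho .
\)

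Take any $a\in\iv{A}_{ij}$ and $c\in\iv{J}_{ij}$. By the definition of Hausdorff distance there is $a'\in\iv{A}_{ij}$ with $|c-a'|\le q(\iv{J}_{ij},\iv{A}_{ij})\le\Lambda_{ij}\rho$. Since $a,a'\in\iv{A}_{ij}$, we have $|a'-a|\le\operatorname{wid}(\iv{A}_{ij})=2\delta_{ij}$. The triangle inequality then gives $|c-a|\le 2\delta_{ij}+\Lambda_{ij}\rho$. This already implies the claimed bound with $2\Lambda_{ij}\rho$, so the factor 2 on the Lipschitz term is slack.

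The main obstacle is making the Hausdorff step rigorous when $\iv{J}_{ij}$ and $\iv{A}_{ij}$ are not nested. Under inclusion isotonicity, $u_m\in\iv{u}$ gives $\iv{A}\subseteq\iv{J}$, which helps, but I will not rely on it: the argument above only uses that every point of $\iv{J}_{ij}$ lies within $q$ of some point of $\iv{A}_{ij}$, which holds by definition. The other point to check is that "rad" of the interval product $\iv{D}_{ij}\iv{v}_j$ means the exact interval-arithmetic product, for which the symmetric-multiplication identity holds. Summing over $j$ then yields~\eqref{eq:widthbound}.
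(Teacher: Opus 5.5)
Your proof is correct, and its skeleton matches the paper's. Because $\iv{u}-u_m=[-r,r]$ is symmetric, each row radius equals $\sum_j \operatorname{mag}(\iv{J}_{ij}-\iv{A}_{ij})\,r_j$, so everything reduces to bounding $\operatorname{mag}(\iv{J}_{ij}-\iv{A}_{ij})$.

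The difference is in that scalar step. The paper uses the midpoint--radius formulas $\operatorname{mag}(\iv{a}-\iv{c})=|\operatorname{mid}\iv{a}-\operatorname{mid}\iv{c}|+\operatorname{rad}\iv{a}+\operatorname{rad}\iv{c}$ and $q=|m_1-m_2|+|r_1-r_2|$. It then bounds the midpoint gap and the radius excess by $\Lambda_{ij}\rho$ \emph{separately}, which is where its factor $2\Lambda\rho$ comes from. Your nearest-point and triangle-inequality argument works directly from the metric definition of Hausdorff distance, needs no closed form, and yields the sharper bound $2\delta+\Lambda\rho$.

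The paper's route would give the same sharper bound if it used $|\operatorname{mid}\iv{J}_{ij}-\operatorname{mid}\iv{A}_{ij}|+(\operatorname{rad}\iv{J}_{ij}-\operatorname{rad}\iv{A}_{ij})\le q$ in one piece. So your observation that the factor $2$ on the Lipschitz term is slack is right.
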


\begin{proof}
For intervals \(\iv{a},\iv{c}\),
\(
\operatorname{mag}(\iv{a}-\iv{c})
=\bigl|\operatorname{mid}\iv{a}-\operatorname{mid}\iv{c}\bigr|
+\operatorname{rad}\iv{a}+\operatorname{rad}\iv{c}.
\)
For \([m_1\pm r_1]\) and \([m_2\pm r_2]\), the Hausdorff distance is \(q=|m_1-m_2|+|r_1-r_2|\). Hence the Lipschitz bound yields entrywise
\(
\bigl|\operatorname{mid}\iv{J}-\operatorname{mid}\iv{A}\bigr|\le \Lambda\rho,\quad
\operatorname{rad}\iv{J}\le \operatorname{rad}\iv{A}+\Lambda\rho=\delta+\Lambda\rho,
\)
so \(\operatorname{mag}(\iv{J}-\iv{A})\le 2\delta+2\Lambda\rho\). Since \(\iv{u}-u_m\) is symmetric about \(0\) with radius \(r\), each row of the product is symmetric with radius \(\sum_j \operatorname{mag}(\iv{J}-\iv{A})_{ij} r_j\), proving \eqref{eq:widthbound}.
\end{proof}

\begin{remark}\label{rem:subdivision}
The bound \eqref{eq:widthbound} splits into a term $2\delta r$ that is linear in
the parameter radius and a term $2\Lambda\rho\,r$ that is quadratic in it.
Subdividing $\iv{u}$ shrinks the quadratic term quickly but leaves $2\delta r$
unchanged, since $\delta$ measures the spread of the Jacobian over the state box
$\iv{x}_d$ alone. Contraction can therefore stagnate at a level set by $\iv{x}_d$
rather than by $\iv{u}$, as discussed in Remark~\ref{rem:tlv_iterations}.
In this regime, tightening $\iv{x}_d$ is more
effective than further subdividing $\iv{u}$. Comparing $2\delta r$ against
$2\Lambda\rho\,r$ also provides a simple heuristic for deciding which box to
split next.

\end{remark}
These considerations are incorporated by iteratively tightening LP-based bounds and applying subdivision when the interval enclosure remains conservative.

\begin{algorithm}[t]
\caption{Trapezoidal Linearisation Variant 1}
\label{alg:tlv1}
\begin{algorithmic}[1]
\REQUIRE State box $\iv{x}$, parameter box $\iv{u}$
\ENSURE Contracted parameter box $\iv{u}$ (or infeasibility certificate)
\STATE $u_m \leftarrow \operatorname{mid}(\iv{u})$
\STATE $\iv{A} \leftarrow \iv{J_u}(\iv{x}, u_m)$
\STATE $\iv{b} \leftarrow \iv{f}(\iv{x}, u_m) - \iv{A} u_m + \bigl(\iv{J_u}(\iv{x}, \iv{u}) - \iv{A}\bigr)(\iv{u} - u_m)$
\STATE Construct the linear-program constraint matrix $Mu\le q$ by stacking:
(i)~interval-affine feasibility~\eqref{eq:int_affine_lin_ineq};
(ii)~box constraints $u\ge \underline{\iv{u}}$ and $u\le \overline{\iv{u}}$; and
(iii)~nonnegativity $u\ge 0$.
\FOR{$i=1$ to $p$}
  \STATE Solve $\underline{u}_i\leftarrow\min\ u_i$ subject to $Mu\le q$.
  \STATE Solve $\overline{u}_i\leftarrow\max\ u_i$ subject to $Mu\le q$.
\ENDFOR
\STATE Return $\iv{u}\leftarrow [\underline{u},\overline{u}]\cap \iv{u}$; if any LP is infeasible, declare $\iv{u}$ infeasible.
\end{algorithmic}
\end{algorithm}
\begin{remark}
Algorithm~\ref{alg:tlv1} performs one \emph{LP contraction round} (solve $2p$ LPs). If the enclosure $(\iv{A},\iv{b})$ is loose due to dependency, contraction can stagnate; in that case, contraction is typically paired with subdivision of $\iv{u}$ (and, when appropriate, tightening of $\iv{x}$).
\end{remark}

We next specialize the same LP-bounding template to purely interval-linear systems, for which feasibility can be characterized directly via the Oettli--Prager theorem. Let $\iv{A}\in\mathbb{IR}^{m\times n}$ and $\iv{b}\in\mathbb{IR}^m$ with $m<n$, and define the solution set
\(
\Sigma(\iv{A},\iv{b}) := \{ z \in \mathbb{R}^n \mid \exists \, A \in \iv{A}, \, \exists \, b \in \iv{b} \text{ such that } Az = b \}.
\)
By the Oettli--Prager characterization (Theorem~\ref{theorem:oettli_prager}), membership $z\in\Sigma(\iv{A},\iv{b})$ is equivalent to the nonlinear inequality~\eqref{eq:oettli_prager}. This characterization is used to derive tractable linear programs on sign-fixed domains and, for sign-indefinite variables, via a positive/negative-part reformulation.
On sign-fixed boxes, the absolute values in~\eqref{eq:oettli_prager} can be removed, leading to exact linear constraints.

\begin{proposition}\label{prop:op_lp}
Let $\iv{A}\in\mathbb{IR}^{m\times n}$ and $\iv{b}\in\mathbb{IR}^m$, and define $A_c:=\operatorname{mid}(\iv{A})$, $A_\Delta:=\operatorname{rad}(\iv{A})$, $b_c:=\operatorname{mid}(\iv{b})$, and $b_\Delta:=\operatorname{rad}(\iv{b})$. Fix a search box $\iv{z}\in\mathbb{IR}^n$ with $\iv{z}\subseteq\mathbb{R}^n_{\ge 0}$. A vector $z\in \iv{z}$ belongs to $\Sigma(\iv{A},\iv{b})$ if and only if there exists $r\in\mathbb{R}^m$ such that
\begin{equation}\label{eq:op_lp_constraints}
\begin{aligned}
-r \le A_c z-b_c \le r, \quad r\ge 0,\quad
 r \le A_\Delta z + b_\Delta.
\end{aligned}
\end{equation}
\end{proposition}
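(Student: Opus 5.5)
The plan is to reduce the statement directly to the Oettli--Prager characterisation (Theorem~\ref{theorem:oettli_prager}) and then handle the absolute values by introducing $r$ as an epigraph variable. Since $\iv{z}\subseteq\mathbb{R}^n_{\ge 0}$, every $z\in\iv{z}$ satisfies $|z|=z$. The Oettli--Prager inequality therefore reads, componentwise,
\[
\bigl|A_c z-b_c\bigr|\le A_\Delta z+b_\Delta .
\]
Here I use that $\operatorname{rad}(\iv{A})=A_\Delta$ is entrywise nonnegative, but the sign-fixing of $z$ is what removes $|z|$. So $z\in\Sigma(\iv{A},\iv{b})$ if and only if this displayed inequality holds, and it remains to show that it is equivalent to the existence of $r$ satisfying~\eqref{eq:op_lp_constraints}.

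For the forward direction, assume $z\in\Sigma(\iv{A},\iv{b})$ and set $r:=|A_c z-b_c|$ componentwise. Then $r\ge 0$ holds trivially. The constraint $-r\le A_c z-b_c\le r$ is the elementary fact that $-|t|\le t\le |t|$ for each component. Finally, $r\le A_\Delta z+b_\Delta$ is exactly the Oettli--Prager inequality with $|z|=z$.

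For the converse, suppose some $r\in\mathbb{R}^m$ satisfies~\eqref{eq:op_lp_constraints}. From $-r\le A_c z-b_c\le r$ we obtain $|A_c z-b_c|\le r$ componentwise, since $|t|=\max(t,-t)$. Chaining this with $r\le A_\Delta z+b_\Delta$ gives the displayed inequality. Since $z\ge 0$, this is $|A_c z-b_c|\le A_\Delta|z|+b_\Delta$, and Theorem~\ref{theorem:oettli_prager} gives $z\in\Sigma(\iv{A},\iv{b})$. The constraint $r\ge 0$ is redundant in this direction, because it is implied by the two-sided bound, but it is harmless.

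No step is a real obstacle. The only point needing care is to make explicit where nonnegativity of $z$ enters: it turns the term $A_\Delta|z|$ into the linear expression $A_\Delta z$, which is what makes~\eqref{eq:op_lp_constraints} a system of linear constraints in $(z,r)$. This exactness is also what justifies using these constraints, together with $z\in\iv{z}$, in the LP template~\eqref{eq:lp_enclosure_template} without additional conservatism.
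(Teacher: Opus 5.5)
Your proof is correct and follows the same route as the paper. You use $|z|=z$ on the nonnegative box to reduce the Oettli--Prager inequality to $|A_c z-b_c|\le A_\Delta z+b_\Delta$, then introduce $r$ as an upper bound on $|A_c z-b_c|$, and you also spell out both directions of that equivalence (taking $r:=|A_c z-b_c|$ for necessity), which the paper leaves implicit.
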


\begin{proof}
For $z\in \iv{z}\subseteq\mathbb{R}^n_{\ge 0}$ we have $|z|=z$, so~\eqref{eq:oettli_prager} is equivalent to $|A_c z-b_c|\le A_\Delta z+b_\Delta$. The constraints~\eqref{eq:op_lp_constraints} enforce this inequality by introducing $r\ge 0$ as an upper bound on $|A_c z-b_c|$.
\end{proof}

When signs are not fixed, we avoid orthant splitting by lifting to positive/negative parts.
\begin{remark}\label{rem:sign_indef_affine}
The nonnegativity restriction is convenient in biomolecular models but not essential. For sign-indefinite variables, write $z=z^+-z^-$ with $z^+,z^-\ge 0$, in which case $z^++z^-\ge |z|$ componentwise (with equality only under complementarity). This lifting yields tractable \emph{outer} relaxations for both interval-linear constraints (below) and interval-affine constraints (by replacing~\eqref{eq:int_affine_lin_ineq} with~\eqref{eq:int_affine_lin_ineq_general}) without enumerating orthants.
\end{remark}

\begin{corollary}\label{cor:op_lp_general}
Let $\iv{A}\in\mathbb{IR}^{m\times n}$ and $\iv{b}\in\mathbb{IR}^m$ with mid/radius data $(A_c,A_\Delta,b_c,b_\Delta)$ as in Proposition~\ref{prop:op_lp}. Fix a (possibly sign-indefinite) search box $\iv{z}=[\underline z,\overline z]\in\mathbb{IR}^n$. Define elementwise bounds $\overline z^+:=\max\{\overline z,0\}$ and $\overline z^-:=\max\{-\underline z,0\}$. If $z\in \iv{z}$ belongs to $\Sigma(\iv{A},\iv{b})$, then there exist $r\in\mathbb{R}^m$ and $(z^+,z^-)\in\mathbb{R}^{2n}_{\ge 0}$ such that
\begin{equation}\label{eq:op_lp_constraints_general}
\begin{aligned}
-r &\le A_c(z^+-z^-)-b_c \le r, \quad r\ge 0,\\
 r &\le A_\Delta(z^++z^-) + b_\Delta,\\
 z^+-z^- &\in \iv{z},\,
 0\le z^+ \le \overline z^+, \, 0\le z^-\le \overline z^-.
\end{aligned}
\end{equation}
Consequently,~\eqref{eq:op_lp_constraints_general} defines a linear-program-relaxation whose projection onto $z$ contains $\Sigma(\iv{A},\iv{b})\cap \iv{z}$.
\end{corollary}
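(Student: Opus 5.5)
The plan is to exhibit the witnesses explicitly and then check each constraint in \eqref{eq:op_lp_constraints_general} using the Oettli--Prager inequality (Theorem~\ref{theorem:oettli_prager}). Fix $z\in\iv{z}\cap\Sigma(\iv{A},\iv{b})$. I would take the complementary positive/negative parts, $z^+:=\max\{z,0\}$ and $z^-:=\max\{-z,0\}$ componentwise, and set $r:=|A_c z-b_c|$. The point of choosing the complementary split is that then $z^++z^-=|z|$ holds exactly, not merely as the inequality noted in Remark~\ref{rem:sign_indef_affine}. This lets the Oettli--Prager inequality transfer without loss.

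The checks then go in order.
\begin{enumerate}
\item Nonnegativity and reconstruction: $z^\pm\ge 0$ and $z^+-z^-=z\in\iv{z}$ hold by construction.
\item Bounds on $z^+$: since $z\le\overline z$ componentwise and $\max\{\cdot,0\}$ is monotone, $z^+=\max\{z,0\}\le\max\{\overline z,0\}=\overline z^+$.
\item Bounds on $z^-$: since $-z\le-\underline z$, we get $z^-=\max\{-z,0\}\le\max\{-\underline z,0\}=\overline z^-$.
\item The $r$-sandwich: $r\ge0$ and $-r\le A_c(z^+-z^-)-b_c\le r$ are immediate from $r=|A_c z-b_c|$ and $z^+-z^-=z$.
\item The radius constraint: Theorem~\ref{theorem:oettli_prager} gives $r=|A_cz-b_c|\le A_\Delta|z|+b_\Delta=A_\Delta(z^++z^-)+b_\Delta$, using $|z|=z^++z^-$.
\end{enumerate}
So the tuple $(z^+,z^-,r)$ is feasible for \eqref{eq:op_lp_constraints_general}.

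For the ``consequently'' statement, consider the map $(z^+,z^-,r)\mapsto z^+-z^-$. The feasible tuple constructed above maps to the given $z$. Every $z\in\Sigma(\iv{A},\iv{b})\cap\iv{z}$ therefore lies in the projection of the feasible set onto $z$, which is the claimed containment. All constraints are affine in $(z^+,z^-,r)$, so the system is a linear program.

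I do not expect a real obstacle here. The one step needing care is making sure the complementary split is used: for an arbitrary split, $z^++z^-$ can exceed $|z|$. That only loosens the radius constraint, so it is harmless for the relaxation direction, but the witness is cleanest with complementarity. The box bounds $\overline z^\pm$ also need the one-line monotonicity check in steps 2 and 3, which covers the case where $\iv{z}$ straddles zero.
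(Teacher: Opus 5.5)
Your proposal is correct and follows the paper's proof essentially step for step. You take the complementary split $z^+=\max\{z,0\}$, $z^-=\max\{-z,0\}$ so that $|z|=z^++z^-$, check the box bounds $\overline z^{\pm}$, and let $r:=|A_cz-b_c|$, so the Oettli--Prager inequality gives the radius constraint; your explicit monotonicity checks only spell out what the paper states in one line.
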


\begin{proof}
Let $z\in \Sigma(\iv{A},\iv{b})\cap \iv{z}$ and define $z^+:=\max\{z,0\}$ and $z^-:=\max\{-z,0\}$ (elementwise). Then $z=z^+-z^-$, $z^+,z^-\ge 0$, $z^+\le \overline z^+$ and $z^-\le \overline z^-$ since $z\in[\underline z,\overline z]$. Moreover, $|z|=z^++z^-$. Applying the Oettli--Prager characterization~\eqref{eq:oettli_prager} gives $|A_c z-b_c|\le A_\Delta(z^++z^-)+b_\Delta$. Introducing $r\ge 0$ as an upper bound on $|A_c z-b_c|$ yields~\eqref{eq:op_lp_constraints_general}.
\end{proof}

An axis-aligned enclosure is obtained by solving componentwise linear programs over the resulting linear feasibility region.

\begin{proposition}\label{prop:lp_enclosure}
Assume that, for each component $i\in\{1,\ldots,n\}$, the linear programs in~\eqref{eq:lp_enclosure_template} are either infeasible or attain finite optimal values. If the feasible set is nonempty, then every feasible $z$ satisfies $\underline{z}_i\le z_i\le \overline{z}_i$ for all $i$, so the box $[\underline{z},\overline{z}]$ is a certified enclosure of the feasible set. If any of the $2n$ linear programs is infeasible, then the feasible set is empty.
\end{proposition}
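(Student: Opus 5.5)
The proof reads the box $\iv{z}$ and the constraints $Mz\le q$ together as one polyhedron, $P:=\{z\in\mathbb{R}^n : Mz\le q,\ z\in\iv{z}\}$. It then treats the two claims (enclosure and infeasibility certificate) separately, using only the definition of the optimal value of a linear program.

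\emph{Enclosure.} Suppose $P\neq\emptyset$. Each of the $2n$ programs in~\eqref{eq:lp_enclosure_template} has $P$ as its feasible region, so none of them is infeasible. By the standing assumption, each therefore attains a finite optimum, so $\underline z_i=\min_{w\in P}w_i$ and $\overline z_i=\max_{w\in P}w_i$ are well-defined real numbers. For any $z\in P$, the point $z$ is admissible in both the $i$-th minimisation and the $i$-th maximisation. Optimality then gives $\underline z_i\le z_i\le\overline z_i$. Since this holds for every $i$, we get $P\subseteq[\underline z,\overline z]$.

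\emph{Certification.} If the relevant feasible set is the original solution set, e.g.\ $\iv{u}_{\mathrm{feas}}$ or $\Sigma(\iv{A},\iv{b})\cap\iv{z}$, we compose the inclusion above with the outer-relaxation results already proved. Theorem~\ref{theorem:int_linearsystem} (with Remark~\ref{rem:infeasibility_lp}), Proposition~\ref{prop:op_lp}, or Corollary~\ref{cor:op_lp_general} shows that this set is contained in $P$, or in the projection of $P$ onto $z$ in the lifted case. Hence it is also contained in $[\underline z,\overline z]$. For lifted variables $(z^+,z^-,r)$, we take the LP objective to be the $i$-th component of $z^+-z^-$. The same argument then applies to the projection.

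\emph{Infeasibility.} Suppose one of the $2n$ programs is infeasible. All of them share the same feasible region $P$, and only the objective differs. So infeasibility of any one means $P=\emptyset$. By the containment just described, the true feasible set is empty as well.

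The main obstacle is only bookkeeping. We must state clearly that ``feasible set'' means $P$ (or its projection), and that the finiteness assumption rules out unbounded programs when $P\neq\emptyset$. There is also a degenerate case where $P$ is empty but the hypothesis is stated as ``nonempty''. This is covered by the second clause, since every program is then infeasible.
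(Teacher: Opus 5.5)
Your proof is correct and follows the paper's argument. Both rest only on the fact that $\underline{z}_i$ and $\overline{z}_i$ are, by definition, the optimal values of $z_i$ over the common feasible region, so every feasible point lies between them. You also spell out two points that the paper's one-line proof leaves implicit: all $2n$ programs share the same feasible region, which is why infeasibility of any one certifies emptiness; and the enclosure transfers to the original solution sets through the outer relaxations of Theorem~\ref{theorem:int_linearsystem}, Proposition~\ref{prop:op_lp} and Corollary~\ref{cor:op_lp_general}.
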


\begin{proof}
Each $\underline{z}_i$ (respectively, $\overline{z}_i$) is, by definition, the minimum (respectively, maximum) of $z_i$ over the feasible set, so all feasible points satisfy the bounds.
\end{proof}

\begin{corollary}\label{cor:op_enclosure}
Let $\iv{z}\in\mathbb{IR}^n$ be a fixed search box. Applying Proposition~\ref{prop:lp_enclosure} to (i)~the sign-fixed constraint system~\eqref{eq:op_lp_constraints} when $\iv{z}\subseteq\mathbb{R}^n_{\ge 0}$, or (ii)~the sign-indefinite reformulation based on positive/negative parts~\eqref{eq:op_lp_constraints_general} otherwise, yields a certified outer interval box enclosure of $\Sigma(\iv{A},\iv{b})\cap \iv{z}$.
\end{corollary}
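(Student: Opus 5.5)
The plan is to chain Proposition~\ref{prop:lp_enclosure} with the two feasibility characterisations already established, handling the two cases of the statement separately. In each case I would show that the projection onto $z$ of the relevant linear feasible set contains $\Sigma(\iv{A},\iv{b})\cap\iv{z}$. I would then argue that the componentwise extrema over the lifted polyhedron bound every point of that projection.

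\textbf{Case (i): $\iv{z}\subseteq\mathbb{R}^n_{\ge0}$.} I would consider the polyhedron $P:=\{(z,r): z\in\iv{z},\ \text{\eqref{eq:op_lp_constraints} holds}\}$. By Proposition~\ref{prop:op_lp}, $z\in\iv{z}$ lies in $\Sigma(\iv{A},\iv{b})$ if and only if some $r$ satisfies~\eqref{eq:op_lp_constraints}. Hence $\pi_z(P)=\Sigma(\iv{A},\iv{b})\cap\iv{z}$ exactly.

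I then apply the LP template~\eqref{eq:lp_enclosure_template} with decision vector $w=(z,r)$, minimising and maximising only the $z_i$ coordinates. The standing assumption that these LPs are infeasible or attain finite optima gives the needed regularity; finiteness is in any case automatic because $z$ is confined to the bounded box $\iv{z}$. Proposition~\ref{prop:lp_enclosure} then says every $(z,r)\in P$ satisfies $\underline z_i\le z_i\le\overline z_i$. So every $z\in\Sigma\cap\iv{z}$ lies in $[\underline z,\overline z]$. If some LP is infeasible, $P=\emptyset$, and the intersection is empty, which is consistent with Remark~\ref{rem:infeasibility_lp}.

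\textbf{Case (ii): sign-indefinite $\iv{z}$.} Here I would use the polyhedron $P':=\{(z^+,z^-,r):\text{\eqref{eq:op_lp_constraints_general}}\}$, with $z:=z^+-z^-$ as a linear function of the variables. Corollary~\ref{cor:op_lp_general} guarantees that every $z\in\Sigma\cap\iv{z}$ admits a lift into $P'$, so $\Sigma\cap\iv{z}\subseteq\pi_z(P')$. This containment is only an inclusion, not an equality, which is harmless because we only need an outer bound.

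The LPs now minimise and maximise the linear objective $z^+_i-z^-_i$ over $P'$, and these values are finite because $z^+-z^-\in\iv{z}$. The argument of Proposition~\ref{prop:lp_enclosure} carries over verbatim: extrema over $P'$ bound every point of $P'$, hence every point of its projection, hence $\Sigma\cap\iv{z}$. An infeasible LP means $P'=\emptyset$, and therefore $\Sigma\cap\iv{z}=\emptyset$.

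The main obstacle is bookkeeping rather than mathematics. Proposition~\ref{prop:lp_enclosure} is phrased for constraints $Mz\le q$ directly in $z$, whereas here we work with lifted variables and a linear image $z=z^+-z^-$. I would address this by noting that its proof only uses the fact that the optimum of a linear objective bounds that objective on the feasible set. This fact holds for any linear functional over any polyhedron, so the same reasoning applies with objective $e_i^\top z$ expressed in the lifted coordinates.
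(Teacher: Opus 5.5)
Your proposal is correct and follows the route the paper intends. The paper gives no separate proof of this corollary; it treats it as the direct combination of Proposition~\ref{prop:lp_enclosure} with the exact characterisation of Proposition~\ref{prop:op_lp} in case (i) and the outer lifting of Corollary~\ref{cor:op_lp_general} in case (ii). Your explicit projection from the lifted variables $(z,r)$ or $(z^+,z^-,r)$ onto $z$, with the bounding argument applied to the linear objective $z^+_i-z^-_i$, fills in bookkeeping the paper leaves implicit.
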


For interval-affine inclusions $0\in \iv{A}z+\iv{b}$ with sign-indefinite $z$, we avoid orthant splitting by introducing $z^+,z^-\ge 0$ with $z=z^+-z^-$. Then the interval product admits the elementwise bounds
\(\underline{A}z^+-\overline{A}z^-\le Az \le \overline{A}z^+-\underline{A}z^-\) for all $A\in \iv{A}$.
Thus $0\in \iv{A}z+\iv{b}$ is implied by the linear inequalities
\begin{equation}\label{eq:int_affine_lin_ineq_general}
\underline{A}z^+-\overline{A}z^-+\underline{b}\le 0,\qquad -\overline{A}z^+ + \underline{A}z^- -\overline{b}\le 0,
\end{equation}
with the additional constraints $z^+,z^-\ge 0$ and $z^+-z^-\in \iv{z}$ for the current search box $\iv{z}$.

Returning to the nonlinear trapezoidal linearisation, we also consider a variant that encloses the Jacobian over the full parameter box.

\begin{proposition}\label{prop:full_box_jacobian}
A second trapezoidal-linearisation variant is obtained by enclosing the Jacobian over the full box rather than centering it at $u_m$. Applying Theorem~\ref{thm:mvt} entrywise yields
\(
  f(x,u)=0 \in \iv{A}u+\iv{b},
\)
where one may take $\iv{A} = \iv{J_u}(\iv{x}, \iv{u})$ and $\iv{b} = \iv{f}(\iv{x}, u_m) - \iv{J_u}(\iv{x}, \iv{u})u_m$.
\end{proposition}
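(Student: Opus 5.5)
The plan is to mimic the first step of the proof of Theorem~\ref{theorem:int_linearsystem}, but without splitting the Jacobian into a centred part and a remainder. Fix $x\in\iv{x}$ and $u\in\iv{u}$ with $f(x,u)=0$, and set $u_m:=\operatorname{mid}(\iv{u})$. Apply Theorem~\ref{thm:mvt} to $f_i(x,\cdot)$ on $\iv{u}$ for each row $i$. This gives points $\xi^i$ on the segment $[u_m,u]\subseteq\iv{u}$ with
\[
f_i(x,u)=f_i(x,u_m)+\nabla_u f_i(x,\xi^i)^{\top}(u-u_m).
\]
Since $(x,\xi^i)\in\iv{x}\times\iv{u}$, inclusion isotonicity shows that each gradient lies in row $i$ of $\iv{J_u}(\iv{x},\iv{u})$. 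Likewise $f(x,u_m)\in\iv{f}(\iv{x},u_m)$. Hence
\[
0=f(x,u)\in\iv{f}(\iv{x},u_m)+\iv{J_u}(\iv{x},\iv{u})(u-u_m).
\]

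The second step rewrites this as an affine inclusion in the point $u$. Pick the realisation $J$ (row by row) witnessing the previous inclusion. Then
\[
J(u-u_m)=Ju-Ju_m\in\iv{J_u}(\iv{x},\iv{u})\,u-\iv{J_u}(\iv{x},\iv{u})\,u_m .
\]
This is the subdistributivity step already used in Theorem~\ref{theorem:int_linearsystem}: a point product with a shared $J$ lies in the sum of the separately evaluated interval products. Collecting terms gives
\[
0\in\iv{A}u+\iv{b},\qquad \iv{A}=\iv{J_u}(\iv{x},\iv{u}),\qquad \iv{b}=\iv{f}(\iv{x},u_m)-\iv{J_u}(\iv{x},\iv{u})\,u_m ,
\]
which is the claimed inclusion. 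One consequence follows for free: for $u\ge0$, part~(ii) of Theorem~\ref{theorem:int_linearsystem} applies verbatim to this $(\iv{A},\iv{b})$, so the variant feeds into the same LP template~\eqref{eq:int_affine_lin_ineq}.

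The main obstacle is the bookkeeping in the second step. The decoupling of $J$ in $Ju$ and in $Ju_m$ is a genuine loss, and it must run in the direction of enlargement, so the argument has to keep $u$ and $u_m$ as point vectors. I would also note that the mean-value points $\xi^i$ differ by row. This causes no difficulty because the inclusion is read row by row: each row uses its own gradient vector, and the rows of an interval matrix are independent.
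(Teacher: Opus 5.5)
Your proposal is correct and follows essentially the same route as the paper: you apply the mean-value form row by row to $f(x,\cdot)$, enclose the gradients in $\iv{J_u}(\iv{x},\iv{u})$ and $f(x,u_m)$ in $\iv{f}(\iv{x},u_m)$, and then rearrange. The only difference is that you spell out the subdistributivity step $J(u-u_m)=Ju-Ju_m\in\iv{A}u-\iv{A}u_m$, which the paper compresses into the single word ``rearranging.''
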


\begin{proof}
Fix $x\in\iv{x}$. By Theorem~\ref{thm:mvt} applied to the mapping $f(x,\cdot)$ on $\iv{u}$, one has $f(x,u)\in f(x,u_m)+J_u(x,\iv{u})\,(u-u_m)$ for all $u\in\iv{u}$. Using inclusion isotonicity of the interval extensions yields $f(x,u_m)\in\iv{f}(\iv{x},u_m)$ and $J_u(x,\iv{u})\subseteq\iv{J_u}(\iv{x},\iv{u})$, and hence $0=f(x,u)\in \iv{f}(\iv{x},u_m)+\iv{J_u}(\iv{x},\iv{u})\,(u-u_m)$. Defining $\iv{A}:=\iv{J_u}(\iv{x},\iv{u})$ and rearranging gives $0\in \iv{A}u+\iv{b}$ with $\iv{b}:=\iv{f}(\iv{x},u_m)-\iv{A}u_m$.

\end{proof}
To obtain Trapezoidal Linearisation Variant~2, modify Algorithm~\ref{alg:tlv1} only in the enclosure-construction step: set $\iv{A}:=\iv{J_u}(\iv{x},\iv{u})$ and $\iv{b}:=\iv{f}(\iv{x},u_m)-\iv{A}u_m$ as in Proposition~\ref{prop:full_box_jacobian}, and keep the remaining LP-based componentwise bounding steps unchanged.

\begin{remark}\label{rem:tlv1_vs_tlv2}
The two variants trade tightness for simplicity. Trapezoidal Linearisation Variant~2 (with $\iv{A}=\iv{J_u}(\iv{x},\iv{u})$) is cheapest to form but can be conservative when $J_u$ varies substantially over $\iv{u}$. Trapezoidal Linearisation Variant~1 separates the midpoint Jacobian from the deviation term and is often tighter for moderate box widths, leading to stronger contraction per LP round.
\end{remark}

\section{Applications}\label{sec:applications}

We evaluate the method on three problem classes: nonlinear circuit design (positive feedback and incoherent feedforward loops), compressed sensing recovery, and sensitivity-analysis linear systems for biochemical networks. Computations use Julia 1.11.0 with IntervalArithmetic v0.20.9 and LazySets v5.1.0. All enclosures are obtained by solving the componentwise LPs of Proposition~\ref{prop:lp_enclosure}, assuming each LP is infeasible or has a finite optimum. For interval-linear constraints, Corollary~\ref{cor:op_enclosure} links Oettli--Prager certification to LP-enclosed boxes. For nonlinear models, trapezoidal linearisation (Assumption~\ref{ass:differentiable}) yields interval-affine enclosures, contracted via Algorithm~\ref{alg:tlv1} (Variant~1) or its Variant~2 modification based on Proposition~\ref{prop:full_box_jacobian}; subdivision mitigates dependency overestimation. Code is on GitHub~\cite{ExampleCodes}.

\subsection{Design of Nonlinear Biomolecular Circuits}

We apply Trapezoidal Linearisation Variants~1 and~2 to two circuit motifs to compute guaranteed parameter bounds for desired steady-state intervals.

\begin{example}[Positive feedback loop]\label{ex:pfl}
\normalfont
The positive feedback loop is a standard motif in biomolecular regulation and can exhibit multistability under strong cooperativity~\cite{alon_introduction_2006}.
\begin{equation}
    \dot{x} = \dfrac{\alpha}{1+(K/x)^h} - \gamma x.
\end{equation}
Here, $x$ denotes the state variable and $u = [\alpha,\ \gamma,\ K]$ represents the parameter vector, with a fixed Hill coefficient $h = 10$. Standard point-based approaches (e.g., grid-based parameter scans, Monte-Carlo sampling with pointwise feasibility checks, or local parameter tuning/optimisation) can identify feasible parameter instances, but they provide neither guaranteed enclosures of the feasible parameter set nor a certificate that an entire parameter region is infeasible. In multistable regimes, inferred feasibility can depend on initialization and equilibrium-branch selection, and unstable equilibria may be missed; moreover, verifying the simultaneous existence of multiple steady states in prescribed intervals typically requires computationally expensive multi-start or branch-tracking procedures and can be sensitive to numerical tolerances.

Both Trapezoidal Linearisation Variant~1 and Trapezoidal Linearisation Variant~2 yield nearly identical bounds for the desired steady-state intervals $X_s = \{X_{s1}, X_{s2}\} = \{[6.77, 7.01], [99.9, 100.1]\}$ when initialized with the parameter box $P_0 = [50, 150] \times [0.9, 1.1] \times [5, 15]$.
The contracted parameter bounds $P_c = [89.73, 110.33] \times [0.9, 1.1] \times [8.2, 9.6]$ satisfy both steady states and illustrate that, even under strong cooperativity (large Hill coefficient), the two variants yield similar results; see Figure~\ref{fig:circuit_bounds}\subref{fig:lineardesign_pfl}.
\end{example}

\begin{example}[Incoherent feedforward loop]\label{ex:iffl}
\normalfont
The two-dimensional incoherent feedforward loop (IFFL) is a commonly used regulatory motif; depending on parameters, it can exhibit adaptation and pulsed responses~\cite{del_vecchio_biomolecular_2015}. The model considered here is

\begin{equation}
    \begin{aligned}
        \dfrac{dx_1}{dt} = \alpha - \gamma x_1, \quad
        \dfrac{dx_2}{dt} = \frac{\alpha}{1 + x_1/K} - \gamma x_2,
    \end{aligned}
\end{equation}
where $x = [x_1, x_2]$ and $u = [ \alpha, \gamma, K]$ are state and parameter vectors, respectively.
We consider the steady-state intervals $\iv{x}_s = [99.9, 100.1] \times [8.99, 9.01]$ and the initial parameter intervals $\iv{p}_0 = [50, 150] \times [0.9, 1.1] \times [9, 11]$.
Trapezoidal Linearisation Variant~1 contracts $\iv{p}_0$ to $\iv{p}_c = [89.7, 110.3] \times [0.9, 1.1] \times [9.24, 10.52]$; see Figure~\ref{fig:circuit_bounds}\subref{fig:lineardesign_iffl}.
Trapezoidal Linearisation Variant~2 yields comparable bounds.
Moreover, as $N_{\mathrm{sub}}$ and $N_{\mathrm{iter}}$ increase, Trapezoidal Linearisation Variant~2 contracts $\iv{p}_0$ further, yielding \(\iv{p}_c = [89.7, 110.3] \times [0.9, 1.1] \times [9.73, 10.05]\).
These results illustrate that the proposed approach can handle parameter uncertainty in multi-dimensional biomolecular circuit models with strong nonlinear interactions and strict steady-state constraints.

\begin{figure}[h!]
    \centering
    \begin{subfigure}[t]{0.49\linewidth}
        \centering
        \refstepcounter{subfigure}\label{fig:lineardesign_pfl}
        \begin{tikzpicture}
          \node[anchor=south west, inner sep=0] (img) {\includegraphics[trim={20cm 5cm 17cm 8cm}, clip, width=\linewidth]{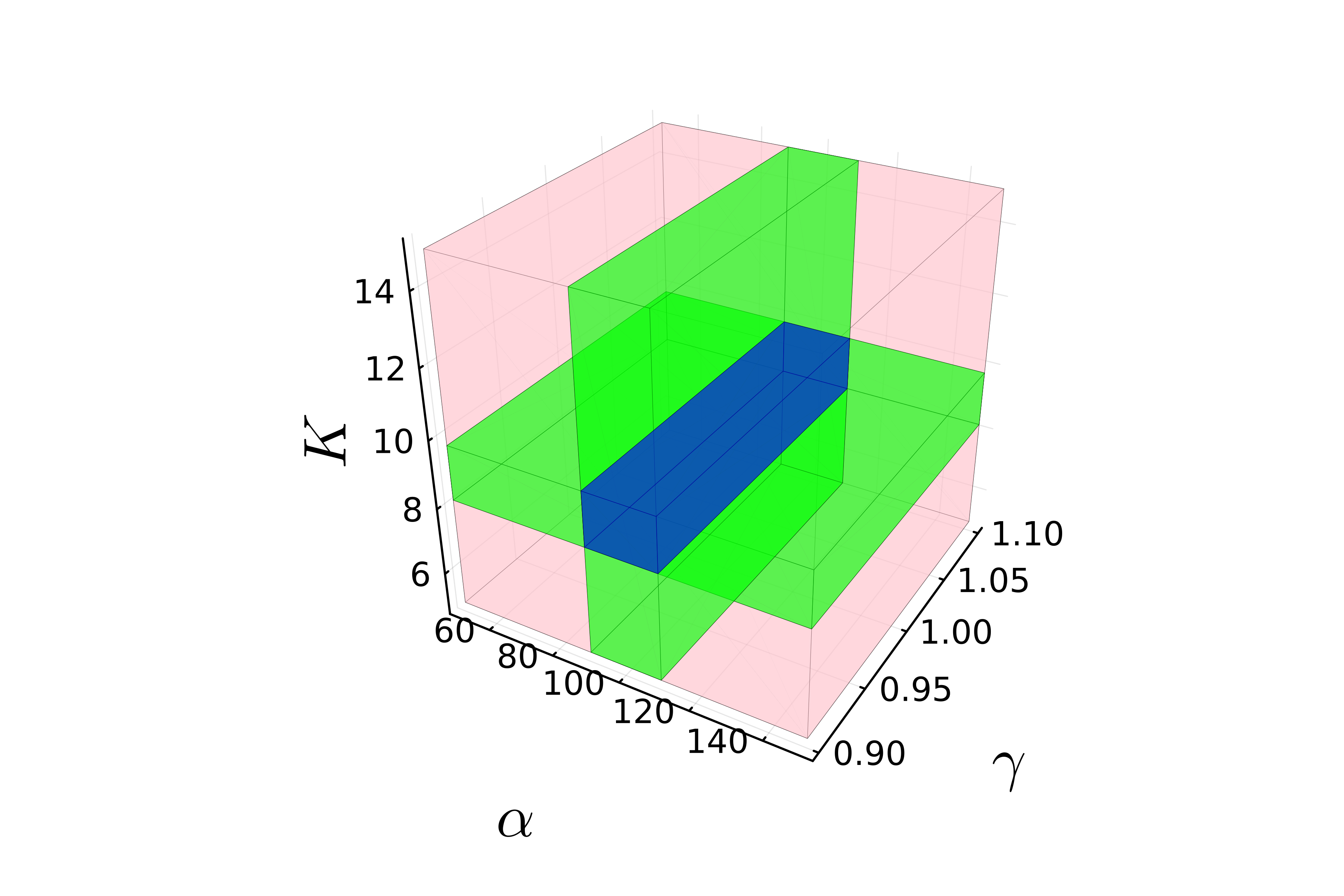}};
          \node[anchor=north west] at (img.north west) {(\alph{subfigure})};
        \end{tikzpicture}
    \end{subfigure}\hfill
    \begin{subfigure}[t]{0.49\linewidth}
        \centering
        \refstepcounter{subfigure}\label{fig:lineardesign_iffl}
        \begin{tikzpicture}
          \node[anchor=south west, inner sep=0] (img) {\includegraphics[trim={20cm 5cm 17cm 8cm}, clip, width=\linewidth]{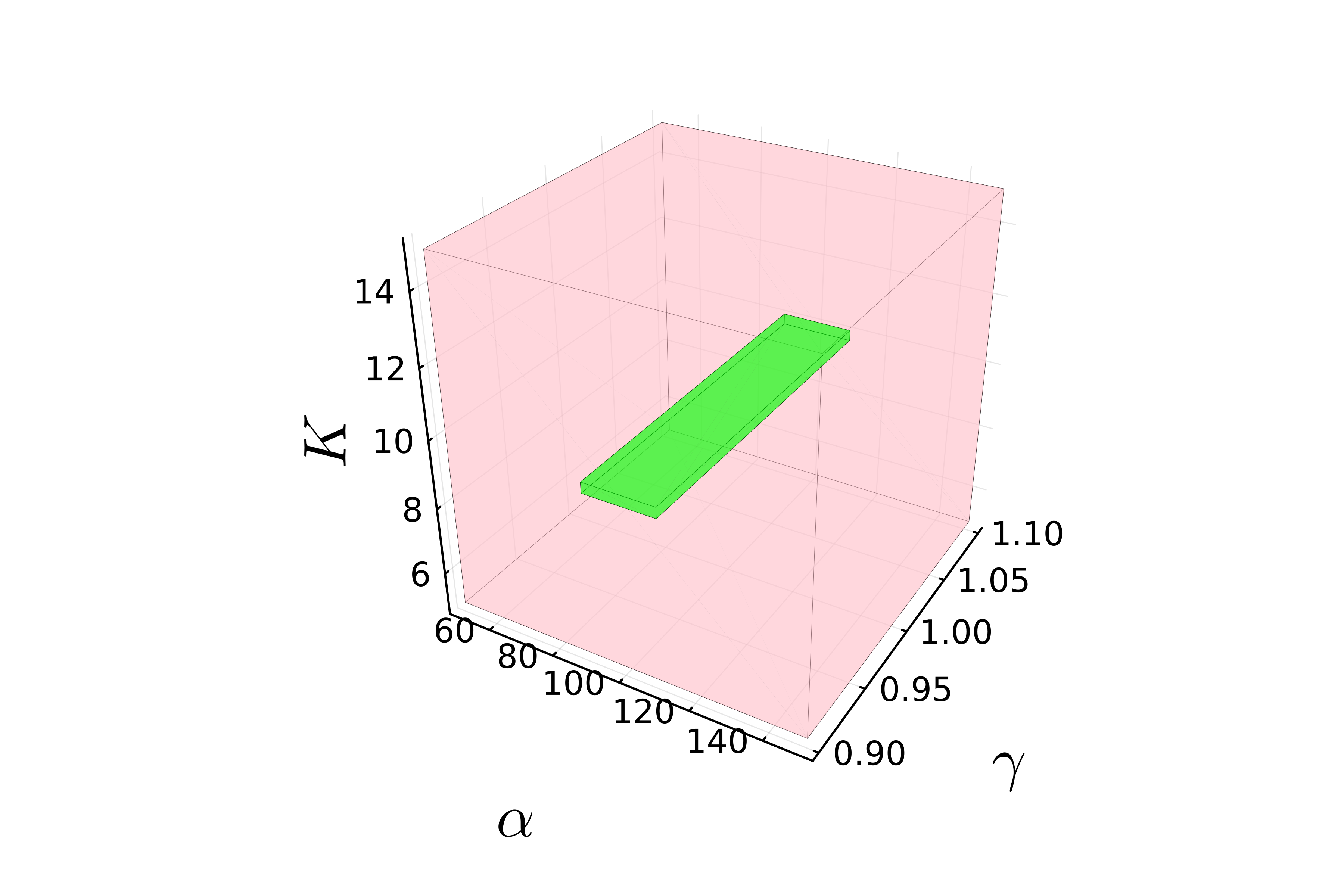}};
          \node[anchor=north west] at (img.north west) {(\alph{subfigure})};
        \end{tikzpicture}
    \end{subfigure}
    \caption{Certified parameter boxes for (a) a positive feedback loop and (b) an incoherent feedforward loop, obtained by LP contraction of the trapezoidal interval-affine enclosure using Algorithm~\ref{alg:tlv1} (Variant~1) and its Variant~2.}
    \label{fig:circuit_bounds}
\end{figure}
\end{example}

\subsection{Interval linear systems}\label{subsec:linear_apps}

The examples apply the framework to linear problems with bounded uncertainty. While interval Gauss–Seidel and Krawczyk methods are established for square, regular systems, we focus on scalable componentwise outer enclosures that also handle underdetermined cases via linear programming. Unlike nonlinear design—where parallel linearization assumes a shared point slope and fails when Jacobians vary over boxes—these linear problems admit exact Oettli–Prager characterizations expressible as linear constraints. We demonstrate the Oettli–Prager/LP enclosure method on underdetermined and square interval linear systems and on sensitivity-analysis–based interval systems.

\begin{example}[Underdetermined linear system]
\normalfont
For the interval system $\iv{A} \in \mathbb{IR}^{3 \times 4}$, the method contracts an initial search box $\iv{z}_0$ to a certified enclosure $\iv{z}_c$ of the feasible solution set.
Consider the interval matrix $\iv{A} \in \mathbb{IR}^{3\times 4}$ and interval vector $\iv{b} \in \mathbb{IR}^{3}$:
\[
\iv{A} =
\begin{bmatrix}
[11.88, 12.12] & [20.29, 20.71] & [0.49, 0.51] \\
[2.97, 3.03] & [-1.01, -0.99] & [0.89, 0.91] \\
[12.87, 13.13] & [0, 0] & [6.93, 7.07] \\
[0.1, 0.1] & [0.2, 0.2] & [-0.1, -0.1]
\end{bmatrix}^\top,
\]
\(
\iv{b} =
\begin{bmatrix}
[4.5,\ 5.5], \, [18,\ 22], \, [-5.5,\ -4.5]
\end{bmatrix}.
\)
Starting from the initial search box $\iv{z}_0 = [-10, 10]^4 \in \mathbb{IR}^{4}$, the solver combines the Oettli--Prager theorem with linear programming to contract this broad region into the tighter enclosure \(\iv{z}_c = [0.8, 1.35] \times [-0.64, 4.68] \times [-1.64, -0.48] \times [-10, 10].
\)
This contraction ensures that all feasible solutions are rigorously contained within the contracted interval solution enclosure $\iv{z}_c$.
\end{example}

\begin{example}[Square linear system]
\normalfont
We considered the interval linear system~\cite{hansen_bounding_1992}.
\(
Ax = b, \, A \in \mathbb{R}^{3 \times 3},\ b \in \mathbb{R}^{3 \times 1},\, A\in\iv{A},\ b\in\iv{b}.
\)
Since both $\iv{A}$ and $\iv{b}$ consist of interval-valued entries, we obtain the explicit system
\begin{equation}
\iv{A} = \begin{bmatrix}
[0.7,\,1.3] & [-0.3,\, 0.3] & [-0.3,\, 0.3] \\
[-0.3,\, 0.3] & [0.7,\, 1.3] & [-0.3,\, 0.3] \\
[-0.3,\, 0.3] & [-0.3,\, 0.3] & [0.7,\, 1.3]
\end{bmatrix},
\end{equation}

\begin{equation}
\iv{b} = \begin{bmatrix}
[-14,\,-7],\, [9,\,12],\, [-3,\,3]
\end{bmatrix}.
\end{equation}
Applying the proposed Oettli--Prager-based approach yields the enclosure
\(
x \in \iv{x} = [-101, 17] \times [-15, 99] \times [-90, 90].
\)
\end{example}

\begin{table*}[htbp!]

\caption{Scalability of Trapezoidal Linearisation Variants~1 and~2: subdivision factor $N_{\mathrm{sub}}$, contraction passes $N_{\mathrm{iter}}$, LP count $N_{\mathrm{LP}}$, volume ratio $\mathrm{Vol}(P_c)/\mathrm{Vol}(P_0)$, and runtime.}
\label{tab:scaling}
\centering
\small
\setlength{\tabcolsep}{4pt}
\resizebox{\textwidth}{!}{%
\begin{tabular}{| l | c | c c c c c | c c c c c|}
\hline
\rowcolor{blue!5}
 &  & \multicolumn{5}{c|}{\textbf{Trapezoidal Linearisation Variant 1}} & \multicolumn{5}{c|}{\textbf{Trapezoidal Linearisation Variant 2}} \\
\rowcolor{blue!5}
\textbf{Example} & $\boldsymbol{(n,p)}$ & $\boldsymbol{N_{\mathrm{sub}}}$ & $\boldsymbol{N_{\mathrm{iter}}}$ & $\boldsymbol{N_{\mathrm{LP}}}$ & $\boldsymbol{\frac{\mathrm{Vol}(P_c)}{\mathrm{Vol}(P_0)}}$ & \textbf{Time (s)} & $\boldsymbol{N_{\mathrm{sub}}}$ & $\boldsymbol{N_{\mathrm{iter}}}$ & $\boldsymbol{N_{\mathrm{LP}}}$ & $\boldsymbol{\frac{\mathrm{Vol}(P_c)}{\mathrm{Vol}(P_0)}}$  & \textbf{Time (s)}\\
\hline
Positive Feedback Loop & (1,3) & $2$ & $1$ & $48$   & $0.103$ & $3.36$ & $2$ & $1$ & $48$   & $0.103$ & $0.051$ \\
Incoherent Feedforward & (2,3) & $5$ & $1$ & $750$ & $0.045$ & $3.76$ & $5$ & $1$ & $750$ & $0.206$ & $0.56$ \\
\hline
\end{tabular}%
}

\end{table*}
\begin{example}[Sensitivity analysis]\label{ex:sensitivity}
\normalfont
We revisit the three-species biochemical network of Example~1 in~\cite{blanchini_bdc_2019}, with states $x=[a\ b\ c]^\top$ and inputs $u=[u_a\ u_b\ u_c]^\top$.
In BDC form, the state Jacobian is
\(
J_x(x,u)=B\,\Delta_x(x,u)\,C,
\)
with fixed integer matrices $B\in\mathbb{Z}^{3\times 8}$ and $C\in\mathbb{Z}^{8\times 3}$, and diagonal $\Delta_x$ collecting reaction-derivative magnitudes.
At steady state, the state variation satisfies the (uncertain) linear system
\(
A(\Delta_x)\,\Delta x=b(\Delta_u)\,\Delta u,
\qquad A(\Delta_x)=-B\,\Delta_x\,C,\quad b(\Delta_u)=E\,\Delta_u\,F,
\)
with $(E,F,\Delta_u,\Delta u)$ as in~\cite{blanchini_bdc_2019}. An illustrative example is
\[
A(\Delta_x)=\begin{bmatrix}
\zeta+\delta+\alpha & -\eta & -\theta+\varepsilon\\
\zeta & \eta+\beta & -\theta\\
-\zeta+\delta & -\eta & \theta+\varepsilon+\gamma
\end{bmatrix},
\, b(\Delta_u)\Delta u=\begin{bmatrix}1\\0\\0\end{bmatrix}.
\]
Using the Oettli--Prager LP enclosure (Corollary~\ref{cor:op_enclosure}), the order-of-magnitude uncertainty
\(
\alpha,\beta,\gamma,\delta,\varepsilon,\zeta,\\ \eta,\theta\in[1,10]
\)
yields the certified enclosure
\(
\Delta x\in[-0.042,0.48]\times[-1.25,0.11]\times[-0.25,0.11].
\)
The same procedure applies without any monotonicity assumption on $\Delta_x$; e.g., for
\(
\alpha,\beta,\gamma,\delta,\varepsilon,\zeta,\eta,\theta\in[-4,5],
\)
we obtain
\(
\Delta x\in[-0.083,0.067]\times[-0.25,0.2]\times[-0.11,0.11].
\)
\end{example}

\begin{example}[Compressed sensing]
\normalfont
We addressed a sparse signal recovery problem with $n = 10$ state variables and $m = 5$ measurements, resulting in an underdetermined system.
The sensing matrix $\iv{\Theta} \in \mathbb{IR}^{5 \times 10}$ and measurement vector $\iv{y} \in \mathbb{IR}^5$ were subject to interval uncertainty. A subset of the sensing matrix bounds is shown below (the complete matrix is provided in the code~\cite{ExampleCodes}):

\begin{equation}
\iv{\Theta} = \begin{bmatrix}
[1.05, 1.17] & [0.82, 0.90] & \cdots & [1.87, 2.07] \\
\vdots & \vdots & \ddots & \vdots \\
[1.94, 2.14] & [0.31, 0.34] & \cdots & \dots
\end{bmatrix}.
\end{equation}
We applied a uniform error threshold of $\pm 5\%$ to the measurements, resulting in the interval vector \(\iv{y} =
\begin{bmatrix}
[1.3, 1.42],\,
[1.5, 1.66], \,
[0.5, 0.55], \,
[1.1, 1.22], \,
[1.2, 1.32]
\end{bmatrix}.
\)
The problem was formulated as
\(
\min_{s} \; \lVert s \rVert_1 \quad \text{subject to} \\ \quad \underline{y} \le \Theta  s \le \overline{y}, \, \Theta  \in \iv{\Theta}.
\)
where $\underline{y}$ and $\overline{y}$ denote the lower and upper bounds of the interval measurements, respectively. In this experiment we enforce robust feasibility via the positive/negative-part lifting $s=s^+-s^-$ together with the corresponding linear bounds on $\iv{\Theta} s$, yielding a linear program.
The solver computed the estimated sparse signal
\(
\hat{s} = [0,\; 0.5,\; 0.28,\; 0.58,\; 0,\; 0.1,\; 0,\; 0,\; 0,\; 0].
\)
The method correctly identified the non-zero components (e.g., $s_2, s_3, s_4, s_6$) while suppressing zero components, recovering the signal despite measurement noise.
\end{example}

\section{Discussion}\label{sec:discussion}

This section highlights practical takeaways on enclosure tightness and computational cost.
Across the nonlinear design, interval-linear benchmarks, and compressed sensing the proposed linear-program-based relaxations yield certified componentwise enclosures while remaining computationally tractable for the problem sizes considered.
The main bottleneck is interval overestimation: dependency effects can inflate interval-affine bounds and weaken contraction. Subdivision reduces this conservatism, but introduces combinatorial growth. For a fixed polyhedral relaxation $Mz\le q$ over $\iv{z}\in\mathbb{IR}^n$, one enclosure pass requires $2n$ linear programs; one contraction round for $\iv{u}\in\mathbb{IR}^p$ requires $2p$ linear programs. Under uniform subdivision into $N_{\mathrm{sub}}$ intervals per parameter and $N_{\mathrm{iter}}$ contraction passes per subbox, the linear program (LP) count scales as $N_{\mathrm{LP}}=2p\,N_{\mathrm{iter}}\,N_{\mathrm{sub}}^p$.
In practice, we keep $N_{\mathrm{iter}}$ small and subdivide only when contraction stagnates. Scaling to larger $p$ will benefit from adaptive subdivision, structure-exploiting relaxations (e.g., sparsity), and efficient LP implementations (warm starts, presolve, and parallelism across subboxes).

\section{Conclusions}\label{sec:conclusions}

We developed a scalable framework to obtain guaranteed outer enclosures for solutions of a nonlinear underdetermined design problem with bounded uncertainty. Uncertainty is a major challenge in underdetermined models: solution sets are typically continua, and sampling or point-based solvers do not certify solutions that are valid for all admissible realisations. Our linear program-driven framework provides a practical route to rigorous steady-state analysis and parameter certification using standard optimisation solvers. For nonlinear steady-state design, we constructed interval-affine relaxations via a novel trapezoidal linearisation technique. We iteratively contracted the interval search boxes by solving families of linear programs, with subdivision used when dependency limits contraction. For interval linear systems, we exploited the Oettli--Prager characterization to derive tractable linear programs that certify enclosures in underdetermined and square settings that were structurally related to the nonlinear design problem, such as for sensitivity analysis and robust compressed sensing.

The transformation of uncertain, possibly nonlinear, equations into reusable linear program bounding primitives is useful in applications ranging from biomolecular circuit design, sensitivity analysis, and robust compressed sensing. The method makes the tradeoff between tightness of the binding and the runtime explicit and tenable.

\section*{Acknowledgements}

The authors thank Dr. Jagannath Samatharay for his constructive feedback.

\bibliographystyle{IEEEtran}

\bibliography{autosam-2/references}

\end{document}